\spnewtheorem*{lemma*}{Lemma}{\bfseries}{\itshape}
\spnewtheorem*{property*}{Property}{\bfseries}{\itshape}
\spnewtheorem*{theorem*}{Theorem}{\bfseries}{\itshape}
\spnewtheorem*{sketchproof*}{Sketch of proof}{\itshape}{}
\newcommand{\dom}{\textsf{Dom}}
\newcommand{\head}{\textsf{head}}
\newcommand{\seq}{\textsf{seq}}
\newcommand{\vp}{\textsf{vp}}
\newcommand{\alt}{\textsf{alt}}
\newcommand{\para}{\textsf{par}}
\newcommand{\loopS}{\textsf{loop}}
\newcommand{\var}{\textsf{Var}}
\newcommand{\ran}{\textsf{Ran}}
\newcommand{\vran}{\textsf{VRan}}
\newcommand{\idx}{\textsf{Index}}
\newcommand{\size}{\textsf{size}}
\newcommand{\sleft}{\textsf{Left}}
\newcommand{\sright}{\textsf{Right}}
\newcommand{\sigmal}{\sigma_{\textsf{L}}}
\newcommand{\sigmar}{\sigma_{\textsf{R}}}
\newcommand{\termalg}{\T(\F,\V)}
\newcommand{\termground}{\T(\F)}
\newcommand{\T}{\mathcal{T}}
\newcommand{\F}{\mathcal{F}}
\newcommand{\V}{\mathcal{V}}
\newcommand{\M}{\mathcal{M}}
\newcommand{\mappingNotation}{\lambda}
\newcommand{\specialConstantPreservingGeneralisation}{scpg}
\newcommand{\specialConstantPreservingLgg}{scplgg}
\newcommand{\specialConstantPreservingMgg}{scpmgg}
\newcommand{\setOfAllSpecialConstants}{\mathfrak{F}_0}
\newcommand{\setOfAllAUTs}{\mathcal{A}}
\newcommand{\activeAUTs}{A}
\newcommand{\pos}{\textsf{pos}}
\newcommand{\leftmultiset}{\{\!\!\{}
\newcommand{\rightmultiset}{\}\!\!\}}
\newcommand{\conf}{c}
\definecolor{lightred}{rgb}{2.16, 0.38, 0.38}
\definecolor{lightpurple}{rgb}{1.57, 0.38, 2.16}
\definecolor{darkpurple}{rgb}{1.28, 0, 1.28}
\definecolor{brightmaroon}{rgb}{0.76, 0.13, 0.28}
\definecolor{brandeisblue}{rgb}{0.0, 0.44, 1.0}
\definecolor{darkpastelgreen}{rgb}{0.01, 0.75, 0.24}
\definecolor{hibou_col_lf}{RGB}{22, 22, 130}
\newcommand{\hlf}[1]{\textcolor{hibou_col_lf}{\text{#1}}}
\definecolor{hibou_col_ms}{RGB}{15, 86, 15}
\newcommand{\hms}[1]{\textcolor{hibou_col_ms}{\text{#1}}}
\definecolor{myRed}{RGB}{255,0,0}       
\definecolor{myGreen}{RGB}{0,128,0}     
\definecolor{myPurple}{RGB}{128,0,128}  
\definecolor{myYellow}{RGB}{255,204,0}  
\newcommand{\messageOfAction}{\psi}
\newcommand{\lfs}{\theta}
\newcommand{\inter}{\mathbb{I}}
\newcommand{\act}{\mathbb{A}}
\newcommand{\vps}{\mathbb{VP}}
\newcommand{\greencheck}{{\color{green}\checkmark}}
\title{Specializing anti-unification for interaction models composition via gate connections}
\titlerunning{Constrained anti-unification: application to interaction models composition}
\author{
\!\!\!\!\!\!Joel Nguetoum\inst{1}\inst{2}
\!\and\!\!
Boutheina Bannour \inst{1}
\!\!\!\and\!\!
Pascale Le Gall\inst{2}
\!\!\!\and\!\!
Erwan Mahe \inst{1}
}
\authorrunning{ J. Nguetoum, B. Bannour,and P. Le Gall}
\institute{
Université Paris-Saclay, CEA, List, F-91120, Palaiseau, France
\and
Université Paris-Saclay, CentraleSupélec, MICS, F-91192, Gif-sur-Yvette, France
}
\begin{document}

\maketitle

\begin{abstract}
Interaction models describe distributed systems as algebraic terms, with gates marking interaction points between local views. Composing local models into a coherent global one requires aligning these gates while respecting the algebraic laws of interaction operators. We specialize anti-unification (or generalization) via a special constant-preserving 
variant, which preserves designated constants while generalizing the remaining structure. We develop a dedicated rule-based procedure, for computing these generalizations, 
prove its termination, soundness, and completeness, extend it modulo equational theories, and integrate it into a standard anti-unification framework. A prototype tool demonstrates the approach’s ability to recompose global interactions from partial views.

\end{abstract}

\section{Introduction}

\paragraph{Context.}
Our goal is to use anti-unification to merge partial views of a distributed system into a coherent global view.
The notion of \emph{anti-unification} (or \emph{generalization}) was introduced independently by Plotkin~\cite{Plotkin1970} and Reynolds~\cite{Reynolds1970} as the process of computing the most specific generalization of a set of terms. Given two terms $s$ and $t$, the task is to find a term $r$ and substitutions $\sigma_s$ and $\sigma_t$ such that $r\sigma_s = s$ and $r\sigma_t = t$. This contrasts with \emph{unification}~\cite{Unification_revisited}, which seeks a substitution $\sigma$ such that $s\sigma = t\sigma$. While unification computes a common instance, anti-unification computes a \emph{most specific/least general generalization} (lgg) from which the original terms can be recovered. Anti-unification has applications in recursion-scheme detection~\cite{appli_anti_unfication_rec_scheme_detection}, inductive synthesis of recursive functions~\cite{appli_anti_unfication_func_program_synthesis}, and automated software repair~\cite{appli_anti_unfication_bug_fixes_1,appli_anti_unfication_bug_fixes_2}. However, most existing work focuses on \emph{syntactic} generalization.
\emph{Semantics-driven} approaches \cite{CernaUnital,ALPUENTE2014} leverage semantic equivalences between terms to be able to anti-unify a wider class of terms.

Interactions~\cite{MaheSEMANTICS} describe distributed systems via atomic \emph{emission} and \emph{reception} actions and the empty behavior $\varnothing$ which are structured by \emph{weak-sequencing} (\textsf{seq}), \emph{alternative} (\textsf{alt}) and \emph{parallel} (\textsf{par}) composition, as well as \emph{repetition} (\textsf{loop}).
Interactions are related to \emph{Multiparty Session Types} \cite{twoBuyerProtocolRef} and can be drawn in the manner of \emph{Message Sequence Charts} \cite{msc} or \emph{UML sequence diagrams} \cite{uml}. The diagram on the right of Fig.~\ref{fig:example_interaction_composition} depicts an interaction $k$. Operators are represented by boxes, with \textsf{seq} being implicit and time flowing from top to bottom. Vertical lines, called \emph{lifelines}, represent individual subsystems (here, a signaling server $\hlf{ss}$, a dialing client $\hlf{dc}$, and a target client $\hlf{tc}$). Horizontal arrows represent value passing (\textsf{vp}). Here, $\hlf{dc}$ dials $\hlf{ss}$ (via message $\hms{dia}$), which then notifies $\hlf{tc}$ ($\hms{not}$). Finally, $\hlf{tc}$ can either (represented by \textsf{alt}) refuse the connection and warn another system ($\hms{wrn}$) not shown here, or accept it by sending $\hms{con}$ to $\hlf{dc}$, which is answered by an acknowledgment $\hms{ack}$.

\paragraph{Method.}
 
Our objective is to build a global interaction $k$ from partial views $i$ and $j$ defined over a partition of the lifelines (on Fig.\ref{fig:example_interaction_composition}, $k$ is defined over $\{\hlf{dc},\hlf{ss},\hlf{tc}\}$, $i$ over $\{\hlf{tc}\}$ and $j$ over $\{\hlf{dc},\hlf{ss}\}$) .
 
We represent interactions as terms, which allows us to encode their composition as an anti-unification problem (on Fig.\ref{fig:example_interaction_composition}, anti-unification allows e.g., the superimposition of the \textsf{alt} operator in $i$ and $j$).

Our originality lies in identifying the communication channels connecting these partial views as special constants called \emph{gates}. 
For example, on Fig.\ref{fig:example_interaction_composition}, the gate $\mathfrak{\textcolor{red}{a}}$ relate the emission $\hlf{tc}!\hms{not}$ in $i$ and the reception $\hlf{ss}?\hms{not}$ in $j$. These two actions must be coupled in the resulting interaction $k$ to form a value passing $\textsf{vp}(\hlf{tc}!\hms{not},\hlf{ss}?\hms{not})$.

Intermediate terms $s$ and $t$ are derived from $i$ and $j$ by substituting related actions by the gate that relates them ($\mathfrak{\textcolor{red}{a}}$, $\mathfrak{\textcolor{violet}{b}}$, and $\mathfrak{\textcolor{cyan}{c}}$ on Fig.\ref{fig:example_interaction_composition}).
This yields mappings $\mappingNotation_i$ and $\mappingNotation_j$ such that $s\mappingNotation_i = i$ and $t \mappingNotation_j = j$.

Applying our \emph{special constant-preserving anti-unification} modulo an equational theory $E$ yields a 
generalization $r$ of $s$ and $t$ 
with substitutions $\sigma_s, \sigma_t$ such that $r\sigma_s =_E s$, $r\sigma_t =_E t$. As the generalization preserves the gates, $r$ contains the three gates $\mathfrak{\textcolor{red}{a}}$, $\mathfrak{\textcolor{violet}{b}}$, and $\mathfrak{\textcolor{cyan}{c}}$. 
Here, $E$ includes $\textsf{alt}(x,y) \!\approx\! \textsf{alt}(y,x)$ (commutativity of \textsf{alt}) and $\textsf{seq}(\varnothing,x)\! \approx\! x$ ($\varnothing$ neutral element for \textsf{seq}).
The composed interaction $k$ can then be built from $r$, as illustrated on Fig.\ref{fig:example_interaction_composition}, by applying a substitution $\sigma_r$ derived from $\sigma_s$ and $\sigma_t$, along with a mapping $\mappingNotation_k$ derived from $\mappingNotation_i$ and $\mappingNotation_j$.

\begin{figure}[t]

\begin{tabular}{|l|l|l|l|}
\hline 
\makecell[l]{

\begin{tabular}{cc}
\makecell[l]{
$i:$
}
&
\raisebox{-.5\height}{
\includegraphics[scale=.45]{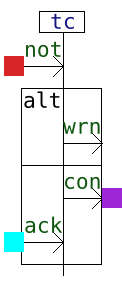}
}
\end{tabular}

\\
\scalebox{.9}{
$
\begin{array}{l}
u = {\scriptstyle\hlf{tc}!\hms{wrn}}\\
\mappingNotation_i =
\left\{\!
\begin{array}{l}
\mathfrak{\textcolor{red}{a}} \mapsto {\scriptstyle\hlf{tc}?\hms{not}},\\
\mathfrak{\textcolor{violet}{b}} \mapsto {\scriptstyle\hlf{tc}!\hms{con}},\\
\mathfrak{\textcolor{cyan}{c}} \mapsto {\scriptstyle\hlf{tc}?\hms{ack}}\\
\end{array}
\!\right\}
\end{array}
$
}
\\
\scalebox{.9}{

\begin{tikzpicture}[every node/.style = {shape=rectangle, align=center}]
\node (o) { \textsf{seq} } [sibling distance=.5cm,level distance=0.5cm]
  child { node (o1) {$\mathfrak{\textcolor{red}{a}}$} }
  child { node (o2) {\textsf{alt}} [sibling distance=1cm,level distance=0.55cm]
    child { node (o21) {$u$} } 
    child { node (o22) {$\textsf{seq}(\mathfrak{\textcolor{violet}{b}},\mathfrak{\textcolor{cyan}{c}})$} }
  }
  ;
  \node[left=-.1cm of o] {$s =$};
\end{tikzpicture}

}
}
&
\makecell[l]{

\begin{tabular}{cc}
\makecell[l]{
$j:$
}
&
\raisebox{-.5\height}{
\includegraphics[scale=.45]{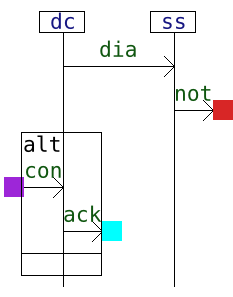}
}
\end{tabular}

\\
\scalebox{.9}{
$
\begin{array}{l}
v = {\scriptstyle\textsf{vp}(\hlf{dc}!\hms{dia},\hlf{ss}?\hms{dia})}\\
\mappingNotation_j =
\left\{\!
\begin{array}{l}
\mathfrak{\textcolor{red}{a}} \mapsto {\scriptstyle\hlf{ss}!\hms{not}},\\
\mathfrak{\textcolor{violet}{b}} \mapsto {\scriptstyle\hlf{dc}?\hms{con}},\\
\mathfrak{\textcolor{cyan}{c}} \mapsto {\scriptstyle\hlf{dc}!\hms{ack}}\\
\end{array}
\!\right\}
\end{array}
$
}
\\
\scalebox{.9}{

\begin{tikzpicture}[every node/.style = {shape=rectangle, align=center}]
\node (o) { \textsf{seq} } [sibling distance=1.5cm,level distance=0.5cm]
  child { node (o1) {\textsf{seq}} [sibling distance=.35cm,level distance=0.55cm]
    child { node (o11) {$v$} }
    child { node (o12) {$\mathfrak{\textcolor{red}{a}}$} }
  }
  child { node (o2) {\textsf{alt}} [sibling distance=1cm,level distance=0.55cm]
    child { node (o21) {$\textsf{seq}(\mathfrak{\textcolor{violet}{b}},\mathfrak{\textcolor{cyan}{c}})$} }
    child { node (o22) {$\varnothing$} }
  }
  ;
  \node[left=-.1cm of o] {$t =$};
\end{tikzpicture}

}
}
&
\makecell[l]{
\\
{\scriptsize anti-unification:}
\\
{\scriptsize $x,y$ vars}
\\
\scalebox{.9}{

\begin{tikzpicture}[every node/.style = {shape=rectangle, align=center}]
\node (o) { \textsf{seq} } [sibling distance=1cm,level distance=0.5cm]
  child { node (o1) {\textsf{seq}} [sibling distance=.35cm]
    child { node (o11) {$x$} }
    child { node (o12) {$\mathfrak{\textcolor{red}{a}}$} }
  }
  child { node (o2) {\textsf{alt}} [sibling distance=.5cm]
    child { node (o21) {\textsf{seq}} [sibling distance=.35cm]
      child { node (o211) {$\mathfrak{\textcolor{violet}{b}}$} }
      child { node (o212) {$\mathfrak{\textcolor{cyan}{c}}$} }
    }
    child { node (o22) {$y$} }
  }
  ;
  \node[left=-.1cm of o] {$r =$};
\end{tikzpicture}

}
\\
{\scriptsize $r \in CPG_E(s,t)$}
\\
{\scriptsize with witnesses}
\\

\scalebox{.9}{
$
\begin{array}{lll}
\sigma_s 
&
=
&
\left\{\!
\begin{array}{l}
x \mapsto \varnothing\\
y \mapsto u
\end{array}
\!\right\}
\\[0.25cm]
\sigma_t 
&
=
&
\left\{\!
\begin{array}{l}
x \mapsto v\\
y \mapsto \varnothing
\end{array}
\!\right\}
\end{array}
$
}
}
&
\makecell{

\begin{tabular}{cc}
\makecell[l]{
$k:$
}
&
\raisebox{-.5\height}{
\includegraphics[scale=.45]{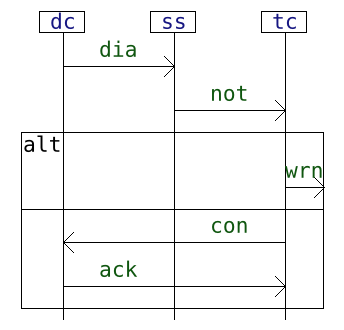}
}
\end{tabular}

\\
{\scriptsize we have $k = r \sigma_r \mappingNotation_k$ with:}
\\
\scalebox{.9}{
$
\begin{array}{l}
\sigma_r = 
\left\{\!
\begin{array}{l}
x \mapsto {\scriptstyle\textsf{seq}(\sigma_s(x),\sigma_t(x))},\\
y \mapsto {\scriptstyle\textsf{seq}(\sigma_s(y),\sigma_t(y))}
\end{array}
\!\right\}
\\[0.25cm]
\mappingNotation_k = 
\left\{\!
\begin{array}{l}
\mathfrak{\textcolor{red}{a}} \mapsto {\scriptstyle\textsf{vp}(\mappingNotation_i(\mathfrak{\textcolor{red}{a}}),\mappingNotation_j(\mathfrak{\textcolor{red}{a}}))},\\
\mathfrak{\textcolor{violet}{b}} \mapsto {\scriptstyle\textsf{vp}(\mappingNotation_i(\mathfrak{\textcolor{violet}{b}}),\mappingNotation_j(\mathfrak{\textcolor{violet}{b}}))},\\
\mathfrak{\textcolor{cyan}{c}} \mapsto {\scriptstyle\textsf{vp}(\mappingNotation_i(\mathfrak{\textcolor{cyan}{c}}),\mappingNotation_j(\mathfrak{\textcolor{cyan}{c}}))}
\end{array}
\!\right\}
\end{array}
$
}
}
\\
\hline 
\end{tabular}

  \caption{Composing interactions $i$ and $j$ into $k$.}
  \label{fig:example_interaction_composition}
\end{figure}


\paragraph{Related work}  
Compositionality for distributed systems is an active topic in Multi-party Session Types (MPST)~\cite{LangeSynthesis,CompositionalChoreographies,BarbaneraMPSTComposition1,BarbaneraMPSTComposition2,Stolze2023}. Choreography-based languages, including MPST and our interaction language, represent systems as algebraic terms with equational laws defining operator semantics. MPSTs specify a global type and, via projection, derive local types that ensure safety and avoid deadlocks, at the cost of restrictive structural forms (e.g., “choice at sender”). \cite{LangeSynthesis} composes partial views expressed as CCS-like local types, while~\cite{CompositionalChoreographies,Stolze2023} consider partial MPSTs interacting with the environment, but remain constrained by operational assumptions. \cite{BarbaneraMPSTComposition1,BarbaneraMPSTComposition2} introduce gateways relaying messages between sessions, enabling open-system composition. Similar ideas exist for Communicating Finite State Machines (CFSMs)~\cite{BarbaneraCFSMComposition1,BarbaneraCFSMComposition2,LangeCFSMtoGlobal}, where composition relies on gateways or safe CFSM classes. Our anti-unification-based composition relies only on algebraic equations to merge partial terms, enabling a broader class of behaviors without structural or operational constraints.

\paragraph{Contributions and paper organization.}

In Sec.\ref{sec:Preliminaries} we present standard concepts and notations related to term algebras and anti-unification. In Sec.\ref{sect:constant-preserving-generalization} we introduce \emph{special constant-preserving (sc-preserving) anti-unification}, which preserves designated constants while generalizing other subterms. We establish its relation to standard anti-unification, and characterize conditions for the existence and uniqueness of sc-preserving least general generalizations (lggs).

In Sec.\ref{sec:algo_without_eqs}, we prove termination, soundness, and completeness of our \emph{rule-based algorithm} for computing sc-preserving lggs. 

In Sec.\ref{sec:extension_modulo_E}, we extend sc-preserving anti-unification \emph{modulo equational theories}~\cite{ALPUENTE2014,ALPUENTE2022}. 
In Sec.\ref{sec:application_to_int_compo}, we show that \emph{interaction composition} reduces to sc-preserving anti-unification by encoding gates as special constants to align local views, proving preservation of local behaviors via projection. 
Finally, we present an experimental validation of a prototype tool~\cite{generalizer_repo_2025} that implements sc-preserving generalization modulo equations, demonstrating its ability to recombine interactions taken from the literature.

\section{Preliminaries}
\label{sec:Preliminaries}

\paragraph{Lists.} For any set $X$, we denote by $X^*$ (resp.~$X^n$ for $n \in \mathbb{N}$ and $X^+$) the set of all lists (resp.~lists of length $n$ and non-empty lists) of elements of $X$. $\varepsilon$ denotes the empty list and for any list $\overline{x} \in X^*$ of length $|\overline{x}|$, $\forall i \in [1,|\overline{x}|],~x_i$ denotes the element of $\overline{x}$ at index $i$.
For any $p,q \in X^*$, we say that $p$ is a strict prefix of $q$, denoted by $p \triangleleft q$, iff there exists some $p' \in X^+$ such that $q = p.p'$ where ``$.$'' denotes list concatenation. 
This defines a strict partial order $\triangleleft$ on lists.

\paragraph{Terms.} We adopt notations from \cite{CernaUnital}. Let $\F = (\F_n)_{n \in \mathbb{N}}$ be a family of function symbols indexed by their arity and $\V$ be a countable set of variables. The set of $\F$-terms over $\V$, denoted $\termalg$, is defined inductively by $t ::= x \mid f(t_1,\cdots,t_n)$ where $x \in \V$, $f \in \F_n$ and $t_1,\ldots,t_n \in \termalg^n$.
For any $f \in \F_n$ and any list of $n$ terms $\overline{t} \in \termalg^n$, we may denote the term $f(t_1,\cdots,t_n)$ by $f(\overline{t})$.
Constants are function symbols in $\F_0$. The head symbol of a term is $\head(x)=x$ and $\head(f(\overline{t}))=f$. Given a term $t$, $\var(t)$ is the set of variables occurring in it. If $\var(t) = \emptyset$, $t$ is said to be ground and $\termground$ denotes the set of all ground terms. 
The set of positions of $t$, denoted $\pos(t) \subset \mathbb{N}^*$, is defined by $\pos(x) = \{\varepsilon\}$ and $\pos(f(\overline{t})) = \{\varepsilon\} \cup \bigcup_{i=1}^{|\overline{t}|} \{i.p \mid p \in \pos(t_i)\}$, where the empty list $\varepsilon$ identifies the root position. 
For $p \in \pos(t)$, the subterm at position $p$ is denoted $t|_{p}$.
The \emph{size} of a term $t$ is defined inductively as $\size(x) = 1$, and $\size(f(\overline{t})) = 1+ \sum_{1}^{|\overline{t}|} \size(t_i)$.

\paragraph{Substitutions.}  
A \emph{substitution} $\sigma : \V \to \termalg$ is a mapping 
which differs from the identity only on a finite number of variables. 
The application of $\sigma$ to a term $t$,  written $t\sigma$ using the usual postfix notation, is defined as $x\sigma$ if $t \in \V$, and $f(t_1\sigma, \dots, t_n\sigma)$ if $t = f(t_1, \dots, t_n)$, where $f \in \F_n$. 
The \emph{domain}, \emph{range} and \emph{variable range} of $\sigma$ are resp. defined as $\dom(\sigma) = \{x \mid x\sigma \ne x\}$, $\ran(\sigma) = \{x\sigma \mid x \in \dom(\sigma)\}$, and $\vran(\sigma) = \bigcup_{x \in \dom(\sigma)} \var(x\sigma)$. A substitution can be written as $\{x_1 \mapsto t_1, \dots, x_n \mapsto t_n\}$, listing only the bindings for the variables in its domain.
The identity substitution is denoted as $Id_\V$. 
The \emph{composition} $\sigma\theta$ of substitutions $\sigma$ and $\theta$ is defined by $t\sigma\theta = (t\sigma)\theta$ for any term $t$.

A substitution $\sigma$ is a \emph{renaming} if there exists a substitution $\sigma^{-1}$ such that $\sigma\sigma^{-1} = Id_{\V}$. 

Two terms $t$ and $t'$ are \emph{renaming-equivalent}
if there exists a renaming $\sigma$ with $t\sigma = t'$.

\paragraph{Equational theory.}
Terms of an algebra $\termalg$ can be subject to properties expressed using equations.
An equation is a couple $(t,t')$ of terms in $\termalg^2$, written as $t \approx t'$. Given a set of equations $E$, the equational theory $=_E$ is the least congruence on $\termalg$ that contains $E$ and is closed under substitutions.

\paragraph{$E$-generalization.}
For two terms $r,t \in \termalg$, we say that $r$ is \emph{less specific} than $t$ modulo a set of equations $E$, written $r \preceq_E t$, if there exists a substitution $\sigma$ such that $r\sigma =_E t$. 
We also say that $r$ is $E$-more general than $t$ as it is possible to transform $r$, via the application of a substitution, into a term that is $E$-equivalent to the more specific $t$.
The relation $\preceq_E$ is a \emph{preorder}, i.e. is reflexive and transitive.
We denote by $\prec_E$ and $\simeq_E$ the corresponding strict and equivalence relations, i.e. $r \prec_E t$ iff $(r \preceq_E t)\wedge(t \not\preceq_E r)$ and $r \simeq_E t$ iff $(r \preceq_E t)\wedge(t \preceq_E r)$. 

A term $r$ is a \emph{$E$-generalization} of two terms $s$ and $t$ if there exist substitutions $\sigma_s$ and $\sigma_t$ such that $r\sigma_s =_E s$ and $r\sigma_t =_E t$, i.e., $r \preceq_E s$ and $r \preceq_E t$. 
The substitutions $\sigma_s$ and $\sigma_t$ are the \emph{witnessing substitutions}.
Intuitively, $r$ captures a common structure of $s$ and $t$ modulo $E$, abstracting some differences by variables. 
For instance, given $E=\{f(x,f(y,z)) \approx f(f(x,y),z)\}$ an $E$-generalization of $s = f(f(a,u),u))$ and $t = f(a, f(v,v))$, is $r = f(a, f(x,x))$ with $\sigma_s = \{x \mapsto u\}$ and $\sigma_t = \{x \mapsto v\}$.
The term $r' = f(a, f(x,y))$ is also an $E$-generalization of $s$ and $t$ but $r'$ is strictly more general than $r$, i.e. $r' \prec_E r$, since $r' \{y \mapsto x\} = r$ and $r\npreceq_E r'$.

A \emph{least general $E$-generalization} ($E$-lgg) $r$ is an $E$-generalization that is \emph{maximal} with respect to~$\preceq_E$, i.e., there exists no generalization $r'$ such that $r \prec_E r'$. 
Dually, a \emph{most general $E$-generalization} ($E$-mgg) reduces $s$ and $t$ to a single variable $x$ (up to renaming), i.e., $r = x$. It does not capture any structure and is mainly of algorithmic interest. 

When $E=\emptyset$ we will remove the $E$- prefix and use the wording lgg, mgg etc., as well as the notations $\preceq$, $\prec$, $\simeq$.
The existence and uniqueness of lggs up to variable renaming has been established, for example, in 
\cite{Unification_revisited}.

\section{Constant-preserving generalization}
\label{sect:constant-preserving-generalization}

Let us introduce a set $\setOfAllSpecialConstants \subseteq \F_0$ of \emph{special constants}. In the following, we generically denote by $\mathfrak{a}, \mathfrak{b}, \mathfrak{c}, \dots$ elements of $\setOfAllSpecialConstants$, and by $u, v, w, \dots$ the \emph{ordinary constants} in $\F_0 \setminus \setOfAllSpecialConstants$. 
For $t \in \termalg$, let $SC(t)$ denote the set of special constants occurring in $t$, and extend this to sets by $SC(\{t_1, \dots, t_n\}) = \bigcup_{i=1}^{n} SC(t_i)$.

For $r,s,t \in \termalg$, $r$ is a \emph{special constant-preserving generalization (\specialConstantPreservingGeneralisation)} of $s$ and $t$ if there exist witnessing substitutions $\sigma_s$ and $\sigma_t$ such that $r$ is a generalization of $s$ and $t$ via $\sigma_s$ and $\sigma_t$ that do not introduce special constants:

\begin{definition}
\label{def:sc_preserving_generalization}
The set of \emph{special constant-preserving (sc-preserving) generalizations} of two terms $s,t \in \termalg$ is defined as:
\[
CPG(s,t) = 
\left\{
r \in \termalg 
\middle|
\begin{array}{l}
\exists \sigma_s,~( r \sigma_s = s) \wedge (SC(\ran(\sigma_s)) = \emptyset)\\
\exists \sigma_t,~( r \sigma_t  = t) \wedge (SC(\ran(\sigma_t)) = \emptyset)\\
\end{array}
\right\}
\]
\end{definition}

Using the above notation, we observe that, since \textbf{(1)} $\sigma_s$ and $\sigma_t$ do not associate  terms with special constants to variables, and that \textbf{(2)} the equalities $r\sigma_s = s$ and $r\sigma_t = t$ hold, this implies that all special constants appearing in $s$ or in $t$ are preserved in $r$:

\begin{lemma}
\label{lemma:non-existence-cg}
For any two terms $s,t \in \termalg$ we have:
\[
CPG(s,t) \neq \emptyset 
~\Rightarrow~ 
\left(
( SC(s) = SC(t) )
~\wedge~
( \forall r \in CPG(s,t),~ SC(r) = SC(s) )
\right)
\]
\end{lemma}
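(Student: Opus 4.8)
The plan is to isolate the single structural fact that drives both conclusions: a substitution whose range contains no special constant leaves the set of special constants of any term unchanged. Concretely, I would first prove the auxiliary claim that for every $r \in \termalg$ and every substitution $\sigma$ with $SC(\ran(\sigma)) = \emptyset$, one has $SC(r\sigma) = SC(r)$. This is exactly the formalization of observations \textbf{(1)} and \textbf{(2)} noted before the statement.

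This claim I would establish by structural induction on $r$. If $r$ is a variable $x$, then $SC(r) = \emptyset$ since variables carry no constants, while $SC(r\sigma) \subseteq SC(\ran(\sigma)) = \emptyset$ when $x \in \dom(\sigma)$ and $r\sigma = x$ otherwise; either way both sides are empty. If $r$ is a special constant $\mathfrak{a}$, substitution fixes it, so $SC(r\sigma) = \{\mathfrak{a}\} = SC(r)$, and if $r$ is an ordinary constant both sides are empty. For the inductive step $r = f(t_1,\dots,t_n)$, substitution distributes over the arguments, giving $SC(r\sigma) = \bigcup_{i} SC(t_i\sigma)$, and the induction hypothesis rewrites each $SC(t_i\sigma)$ as $SC(t_i)$, whose union is exactly $SC(r)$.

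With the claim in hand, the lemma follows immediately. Assume $CPG(s,t) \neq \emptyset$ and pick any $r \in CPG(s,t)$ with its witnessing substitutions $\sigma_s, \sigma_t$ satisfying Definition~\ref{def:sc_preserving_generalization}. Since $SC(\ran(\sigma_s)) = \emptyset$ and $r\sigma_s = s$, the claim yields $SC(s) = SC(r\sigma_s) = SC(r)$; symmetrically $SC(t) = SC(r\sigma_t) = SC(r)$. Hence $SC(s) = SC(r) = SC(t)$. The first conjunct $SC(s) = SC(t)$ then holds because non-emptiness supplies at least one such $r$, and the second conjunct $\forall r \in CPG(s,t),\, SC(r) = SC(s)$ holds because the equality $SC(r) = SC(s)$ was derived for an arbitrary element of $CPG(s,t)$.

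The argument presents no real obstacle: everything reduces to the auxiliary claim, whose only subtle point is the variable case, where one must use both that variables contribute no special constants and that the range of the substitution is special-constant-free. The remaining cases and the final deduction are routine.
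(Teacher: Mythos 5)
Your proof is correct, and it is organized differently from the paper's. The paper splits the statement into two separate arguments: for the first conjunct it reasons by contradiction, picking a specific $\mathfrak{a} \in SC(s) \setminus SC(t)$ and showing that any $r \in CPG(s,t)$ would force $\mathfrak{a} \in SC(\ran(\sigma_s))$; for the second conjunct it asserts, without proof, the identity $SC(r\sigma_s) = SC(r) \cup SC(\ran(\sigma_s))$ and concludes $SC(s) = SC(r)$. You instead isolate a single auxiliary claim --- that $SC(r\sigma) = SC(r)$ whenever $SC(\ran(\sigma)) = \emptyset$ --- prove it by structural induction (which is exactly the justification the paper leaves implicit), and then obtain \emph{both} conjuncts at once from the chain $SC(s) = SC(r\sigma_s) = SC(r) = SC(r\sigma_t) = SC(t)$, quantifying over an arbitrary $r \in CPG(s,t)$. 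What your route buys is uniformity and rigor: there is no case analysis by contradiction, the inductive argument fills the one genuine gap in the paper's second step, and the first conjunct comes for free rather than needing its own argument. What the paper's route buys is brevity and a pinpointed explanation of \emph{why} failure occurs when $SC(s) \neq SC(t)$ (a specific offending constant must be introduced by a witnessing substitution), which foreshadows the failure-conflict-position analysis of Lemma~\ref{lem:failure_conflict_pos_charac}. Both are sound; yours is the more self-contained.
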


\begin{proof}
Assume $SC(s) \ne SC(t)$. Without loss of generality, let $\mathfrak{a} \in SC(s) \setminus SC(t)$. 
Suppose, for contradiction, that $r \in CPG(s,t)$ witnessing substitutions $\sigma_s$ and $\sigma_t$. Since $r\sigma_t = t$ and $\mathfrak{a} \notin SC(t)$, $\mathfrak{a}$ cannot occur in $r$. But $r\sigma_s = s$ requires $\mathfrak{a} \in SC(\ran(\sigma_s))$, contradicting the sc-preserving condition.\\
Let $r$ in $CPG(s,t)$ with witnessing substitutions $\sigma_s$ and $\sigma_t$. As $s = r\sigma_s$, we have $SC(s) = SC(r\sigma_s) = SC(r) \cup SC(\ran(\sigma_s)) = SC(r)$.
\qed
\end{proof}

In Def.\ref{def:sc-preserving-lgg}, we introduce sc-preserving lggs and mggs (\specialConstantPreservingLgg\, and resp.~\specialConstantPreservingMgg).

\begin{definition}
\label{def:sc-preserving-lgg}
For any $s, t \in \termalg$, we have that $r \in CPG(s,t)$ is an sc-preserving mgg (resp.~lgg) of $s$ and $t$ if $\not\exists~r' \in CPG(s,t)$ s.t.~$r' \prec r$ (resp.~$r \prec r'$) i.e., $r$ is minimal (resp.~maximal) in $CPG(s,t)$ w.r.t.~$\preceq$.
\end{definition}

It transpires that if an \specialConstantPreservingLgg\, exists, then it is an lgg:

\begin{property}\label{prop:sc-preserving-lgg-lgg}
A sc-preserving lgg of $s$ and $t$ in $\termalg$ is an lgg of $s$ and $t$.
\end{property}

\begin{proof}
Let $r$ be an \specialConstantPreservingLgg\, of $s$ and $t$ with witnessing substitutions $\sigma_s$ and $\sigma_t$. Suppose, for contradiction, that a generalization $r'$ of $s$ and $t$ with witnessing substitutions $\sigma_s',\sigma_t'$ exists such that $r \prec r'$, i.e., $r\sigma = r'$ for some substitution $\sigma$. Then
$r\sigma\sigma_s' = r\sigma_s = s$. 
Restricting domains of $\sigma$ 
and $\sigma_s$ 
to $\var(r)$, 
we deduce $\sigma\sigma_s' = \sigma_s$. 
Hence, $SC(\ran(\sigma_s')) \subseteq SC(\ran(\sigma_s))$. 
Since $r\in CPG(s,t)$, $SC(\ran(\sigma_s)) = \emptyset$ and we get $SC(\ran(\sigma_s')) = \emptyset$. A similar reasoning holds by replacing the index $s$ by $t$.
Thus, $r'$ is also sc-preserving, contradicting the maximality of $r$ in $CPG(s,t)$. 
\qed 
\end{proof}

\begin{figure}[t]
    \centering
 \resizebox{.8\textwidth}{!}{%
\begin{tikzpicture}[->,xscale=1.1,yscale=.9]

\node[draw] (mgg) at (0,0) {
$
\begin{array}{lll}
r&=&x\\
\sigma_s&=&\{x \mapsto s\}\\
\sigma_t&=&\{x \mapsto t\}\\
\end{array}
$
};

\node[draw, above left=.25cm and -1.5cm of mgg] (g1) {
$
\begin{array}{lll}
r&=&f(x,y)\\
\sigma_s&=&\{x \mapsto \mathfrak{a}, y \mapsto g(u,u)\}\\
\sigma_t&=&\{x \mapsto \mathfrak{a}, y \mapsto g(v,v)\}\\
\end{array}
$
};

\node[draw, above left=-.5cm and .25cm of g1] (g2) {
$
\begin{array}{lll}
r&=&f(x,g(y,z))\\
\sigma_s&=&\{x \mapsto \mathfrak{a}, y \mapsto u, z \mapsto u\}\\
\sigma_t&=&\{x \mapsto \mathfrak{a}, y \mapsto v, z \mapsto v\}\\
\end{array}
$
};

\node[draw, above=.25cm of g2] (g3) {
$
\begin{array}{lll}
r&=&f(x,g(y,y))\\
\sigma_s&=&\{x \mapsto \mathfrak{a}, y \mapsto u\}\\
\sigma_t&=&\{x \mapsto \mathfrak{a}, y \mapsto v\}\\
\end{array}
$
};

\node[draw,double, above right=-.5cm and .25cm of g1] (scpg2) {
$
\begin{array}{lll}
r&=&f(\mathfrak{a},x)\\
\sigma_s&=&\{x \mapsto g(u,u)\}\\
\sigma_t&=&\{x \mapsto g(v,v)\}\\
\end{array}
$
};

\node[draw,double, above=.25cm of scpg2] (scpg3) {
$
\begin{array}{lll}
r&=&f(\mathfrak{a},g(x,y))\\
\sigma_s&=&\{x \mapsto u,y \mapsto u\}\\
\sigma_t&=&\{x \mapsto v,y \mapsto v\}\\
\end{array}
$
};

\node[draw,double,above=2cm of g1] (lgg) {
$
\begin{array}{lll}
r&=&f(\mathfrak{a},g(x,x))~~~~\\
\sigma_s&=&\{x \mapsto u\}\\
\sigma_t&=&\{x \mapsto v\}\\
\end{array}
$
};

\node[draw,circle,fill=white, above right=-.25cm and -.25cm of lgg] {1};
\node[draw,circle,fill=white, above right=-.25cm and -.25cm of scpg3] {2};

\draw  (lgg) edge (scpg3);
\draw  (lgg) edge (g3);
\draw  (scpg3) edge (scpg2);
\draw  (scpg2) edge (g1);
\draw  (g3) edge (g2);
\draw  (g2) edge (g1);
\draw[->]  (mgg.north |- g1.south) -- (mgg.north);
\draw  (scpg3) edge (g2);

\node[below right=.5cm and -3.5cm of g2, align=left] (leg) {
generalization of $(s,t)$\\
sc-preserving generalization\\
transitive closure of $\preceq$
};
\node[draw,double, above left=-.4cm and .1cm of leg] {$~$};
\node[draw, above left=-.775cm and .1cm of leg] {$~$};
\node[above left=-1.2cm and 0cm of leg] {$\rightarrow$};

\node[below=-.1cm of leg] {
$\begin{array}{lcr}
\!\!\!   &r' \leftarrow r& \!\!\!  
\\
\text{more general }\!\!\!  &r' \preceq r & \!\!\!  \text{ more  informative}
\end{array}
$
};

\node[above left=-.6cm and .1cm of lgg,align=right] {a $lgg$\\{\footnotesize(sc-preserving per Prop.\ref{prop:sc-preserving-lgg-lgg})}};
\node[right=.1cm of mgg] {an $mgg$};
\node[below=.1cm of scpg2] {a sc-preserving $mgg$};

\end{tikzpicture}
 }   
    \caption{Example illustrating generalizations and sc-preserving generalizations}
    \label{fig:generalisers_example_lattice}
\end{figure}
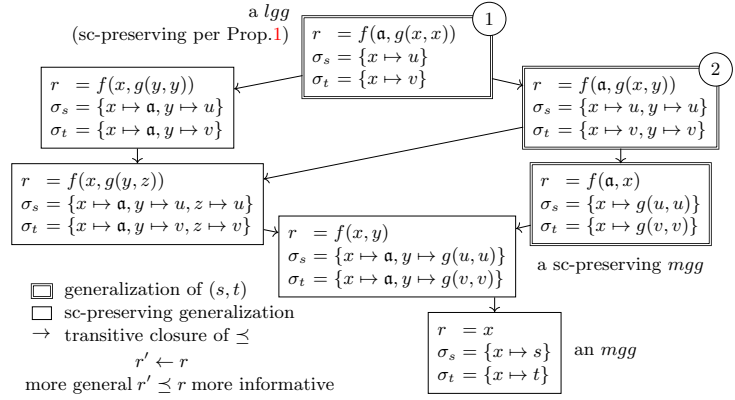

Consider the following example: $s = f(\mathfrak{a}, g(u, u))$ and $t = f(\mathfrak{a}, g(v, v))$. On Fig.\ref{fig:generalisers_example_lattice}, we provide several generalizations of $(s,t)$ and how they are related via the $\preceq$ relation. 
Here, $r = f(\mathfrak{a}, g(x,x))$ (\textcircled{1} on Fig.\ref{fig:generalisers_example_lattice}) is an lgg i.e., it minimally abstracts the differences between $s$ and $t$ and it is sc-preserving i.e., the special constant $\mathfrak{a}$ remains.
As for $r' = f(\mathfrak{a}, g(x,y))$ (\textcircled{2} on Fig.\ref{fig:generalisers_example_lattice}), it is also sc-preserving, but it is not least generalizing. Indeed, $r' \{y \mapsto x\} = r$.

Now consider a different example with $s = f(\mathfrak{a}, g(\mathfrak{b},u))$ and $t = f(\mathfrak{a}, g(v,\mathfrak{b}))$.
Here, any generalization would require a variable to map to a term with special constants: for instance, the candidate $r = f(\mathfrak{a}, g(x, y))$ would require $x\sigma_s = \mathfrak{b}$ and $y\sigma_t = \mathfrak{b}$, thus violating the sc-preservation.

\begin{lemma}
\label{lem:unicity_sc_preserving_lgg}
If a sc-preserving lgg exists, it is unique up to renaming equivalence.
\end{lemma}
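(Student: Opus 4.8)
The plan is to reduce this uniqueness claim to the classical uniqueness of ordinary lggs, leaning on the work already done in Property~\ref{prop:sc-preserving-lgg-lgg}. That property has shown that every sc-preserving lgg is, in fact, an lgg of $s$ and $t$ in the ordinary (unconstrained) sense, i.e.\ that maximality within the subset $CPG(s,t)$ already entails maximality within the full set of generalizations. This is precisely the bridge we need.

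First I would take two arbitrary sc-preserving lggs $r_1$ and $r_2$ of $s$ and $t$ (both exist by hypothesis). Applying Property~\ref{prop:sc-preserving-lgg-lgg} to each, I obtain that $r_1$ and $r_2$ are both ordinary lggs of the same pair $(s,t)$. Then I would invoke the classical result recalled at the end of Section~\ref{sec:Preliminaries} and established in~\cite{Unification_revisited}, namely that the lgg of two terms is unique up to renaming equivalence. Since $r_1$ and $r_2$ are lggs of the same pair, they must be renaming-equivalent, which is exactly the statement to prove.

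For completeness I would also spell out, if needed, why the equivalence $\simeq$ collapses to renaming equivalence on terms: from $r_1 \simeq r_2$, i.e.\ $r_1 \preceq r_2$ and $r_2 \preceq r_1$, one gets substitutions $\sigma,\tau$ with $r_1\sigma = r_2$ and $r_2\tau = r_1$, hence $r_1\sigma\tau = r_1$; restricting domains to $\var(r_1)$ forces $\sigma\tau$ to be the identity on $\var(r_1)$, so $\sigma$ is a renaming and $r_1,r_2$ are renaming-equivalent. This is standard and short.

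I do not expect a genuine obstacle here: the only conceptual subtlety is that an sc-preserving lgg is defined as a maximal element of the \emph{restricted} set $CPG(s,t)$ rather than of all generalizations, so one must be sure this restricted maximality transfers to the ambient order $\preceq$. That transfer is exactly the content of Property~\ref{prop:sc-preserving-lgg-lgg}, so once it is invoked the argument is immediate and the remaining steps are routine applications of the cited classical uniqueness theorem.
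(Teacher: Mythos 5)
Your proposal is correct and follows exactly the paper's own argument: invoke Property~\ref{prop:sc-preserving-lgg-lgg} to conclude that any sc-preserving lgg is an ordinary lgg, then apply the classical uniqueness of lggs up to renaming from~\cite{Unification_revisited}. The extra remark on why mutual generality $\simeq$ collapses to renaming equivalence is a standard detail the paper leaves implicit, but it does not change the route.
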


\begin{proof}
Let $r$ be a \specialConstantPreservingLgg\, of $s$ and $t$. By Prop.\ref{prop:sc-preserving-lgg-lgg}, $r$ is an lgg, which is unique up to renaming by~\cite{Unification_revisited}.
\qed
\end{proof}

Conflict positions \cite{ALPUENTE2009,ALPUENTE2014,ALPUENTE2022}
are common positions in $s$ and $t$ at which a difference occurs. If there are no such positions then $s=t$.

We distinguish in Def.\ref{def:conflict-pos} conflict positions in which the diverging subterms do or do not contain special constants.

\begin{definition}
\label{def:conflict-pos}
Let $s,t \in \termground$. A position $p \in \pos(s) \cap \pos(t)$ is a \emph{conflict position} ($cpos$) iff $\head(s|_p) \neq \head(t|_p)$ and $\forall~ q \triangleleft p,~ \head(s|_q) = \head(t|_q)$.
Moreover, it is a \emph{failure $cpos$} ($fcpos$) iff $SC(s|_p) \neq \emptyset$ or $SC(t|_p) \neq \emptyset$ and a \emph{solvable $cpos$} ($scpos$) otherwise.

\end{definition}

We denote by $cpos(s,t)$ (resp.~$scpos(s,t)$, $fcpos(s,t)$) the set of $cpos$ (resp.~$scpos$, $fcpos$) of any two terms $s$ and $t$.
The absence of failure conflict positions guarantees the existence of scpg, and conversely: 

\begin{lemma}
\label{lem:failure_conflict_pos_charac}
Given $s$ and $t$ in $\termground$, 
$
(CPG(s,t)=\emptyset)
~\Leftrightarrow~
(fcpos(s,t) \neq \emptyset)
$
\end{lemma}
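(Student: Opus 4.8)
The plan is to prove the two implications separately, treating the right-to-left direction (existence of a failure conflict position forces emptiness of $CPG$) as the technical core, and the left-to-right direction by its contrapositive.

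\textbf{For $fcpos(s,t) \neq \emptyset \Rightarrow CPG(s,t) = \emptyset$}, I would fix a failure conflict position $p$ and argue by contradiction: suppose some $r \in CPG(s,t)$ exists with witnessing substitutions $\sigma_s,\sigma_t$. The guiding observation is that applying a substitution never changes the head symbol of a non-variable term, so along any prefix of $p$ at which $r$ consists only of function symbols one has $\head(s|_q) = \head(r|_q) = \head(t|_q)$. Since $p$ is a conflict position, $\head(s|_p) \neq \head(t|_p)$; hence the root-to-$p$ path in $r$ cannot be function-symbol-only, and there is a shortest prefix $q$ of $p$ with $r|_q = x$ a variable. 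Because the path down to $q$ carries no variables, $x\sigma_s = s|_q$ and $x\sigma_t = t|_q$. As $q$ is a prefix of $p$, the subterms $s|_p$ and $t|_p$ occur inside $s|_q$ and $t|_q$, so $SC(s|_q)\supseteq SC(s|_p)$ and $SC(t|_q)\supseteq SC(t|_p)$. Since $p$ is a \emph{failure} conflict position, at least one of $SC(s|_p),SC(t|_p)$ is nonempty; whichever it is, the corresponding witness maps $x$ to a term carrying a special constant, contradicting $SC(\ran(\sigma_s)) = \emptyset$ (resp.\ $SC(\ran(\sigma_t)) = \emptyset$). This forces $CPG(s,t) = \emptyset$.

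\textbf{For the converse I would prove the contrapositive} $fcpos(s,t) = \emptyset \Rightarrow CPG(s,t) \neq \emptyset$ by exhibiting one explicit sc-preserving generalization, constructed by the standard anti-unification recursion over the common structure of $s$ and $t$. When the two head symbols agree, keep the head and recurse componentwise, taking unions of the witness substitutions over disjoint fresh variables; when they differ, i.e.\ at a conflict position $p$, introduce a fresh variable $x$ with $x\sigma_s = s|_p$ and $x\sigma_t = t|_p$; the base case $s = t$ (no conflict positions) takes $r = s$ with identity witnesses. By construction $r\sigma_s = s$ and $r\sigma_t = t$, and the ranges of $\sigma_s,\sigma_t$ consist exactly of the subterms $s|_p,t|_p$ for $p \in cpos(s,t)$. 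The hypothesis $fcpos(s,t) = \emptyset$ makes every such $p$ \emph{solvable}, so $SC(s|_p) = SC(t|_p) = \emptyset$; hence $SC(\ran(\sigma_s)) = SC(\ran(\sigma_t)) = \emptyset$, giving $r \in CPG(s,t)$.

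\textbf{The main obstacle} I anticipate is the positional bookkeeping in the first direction: making rigorous the claim that some prefix of $p$ must be a variable position of $r$, and that along the variable-free portion of the path the identity $s|_q = (r|_q)\sigma_s$ genuinely holds. This rests on a small auxiliary fact, provable by induction on path length, that substitution preserves the head of a non-variable subterm and commutes with taking subterms along variable-free paths. Once that positional lemma is established, both directions reduce to tracking precisely where special constants can and cannot appear in the ranges of the witnessing substitutions.
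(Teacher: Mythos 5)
Your proof is correct, but it takes a genuinely different route from the paper's. The paper proves the equivalence by a single structural induction on $\size(s)+\size(t)$: in each case (both terms constants; distinct heads; equal heads $f$), it establishes the two sides of the biconditional together, using the induction hypothesis on the argument pairs $(s_i,t_i)$ to either assemble a generalization $f(r_1,\dots,r_n)$ from componentwise ones or to lift a failure position $p_i$ to $i.p_i$. You instead split the statement into its two implications: for $fcpos(s,t)\neq\emptyset \Rightarrow CPG(s,t)=\emptyset$ you use a positional argument --- any candidate $r$ must place a variable at some prefix $q$ of the failure position $p$ (else heads of $s|_p$ and $t|_p$ would agree), and that variable's image under a witness then contains the offending special constant since $s|_p$ (or $t|_p$) sits inside $s|_q$ (or $t|_q$); for the converse you exhibit the standard Plotkin-style generalization, whose witness ranges are exactly the subterms at conflict positions, all special-constant-free by hypothesis. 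Your decomposition buys modularity and a sharper picture of \emph{why} failure occurs (it pinpoints the variable forced to absorb a special constant), and your constructive direction essentially previews the algorithm of Sec.~4; the cost is the positional bookkeeping you flag yourself --- the fact that $(r\sigma)|_q = (r|_q)\sigma$ for $q \in \pos(r)$ and that substitution preserves heads of non-variable terms --- which the paper's monolithic induction avoids entirely by never leaving the structural recursion. One hygiene point worth making explicit in your first direction: since $s,t$ are ground, $x\sigma_s = s|_q \neq x$, so $x$ is genuinely in $\dom(\sigma_s)$ and $s|_q \in \ran(\sigma_s)$; this is what licenses the contradiction with $SC(\ran(\sigma_s))=\emptyset$.
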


\begin{proof}
By induction on $N = \size(s)+\size(t)$. If $N = 2$, then $s$ and $t$ are both constant symbols.  
    \begin{itemize}
        \item If $s \ne t$, then the only generalization of $s$ and $t$ is a variable $x$ (up to renaming). Then:
        \begin{itemize}
            \item If $CPG(s,t) = \emptyset$, then $x$ is not sc-preserving. This implies that $SC(s) \ne \emptyset$ or $SC(t) \neq \emptyset$. Hence $\varepsilon$ is a fcpos of $s$ and $t$ so $fcpos(s,t) \neq \emptyset$.
            \item If $fcpos(s,t) \neq \emptyset$ then let's consider one such $fcpos$. As $s$ and $t$ are constants, it can only be $\varepsilon$. We have that $SC(s|_{\varepsilon}) \neq \emptyset$ or $SC(t|_{\varepsilon}) \neq \emptyset$. We deduce that $x$ is not a sc-preserving generalization. Hence $CPG(s,t) = \emptyset$.
        \end{itemize}

        \item If $s=t$, there is no fcpos, whether $s$ is a special constant or not.
        Generalizations of $s$ and $t$ are either a variable $x$ or $s$. If $s$ is (resp. is not) a special constant, then $s$ (resp. $x$) is a sc-preserving generalization and $CPG(s,t) \neq \emptyset$. 
    \end{itemize}
    
\noindent If $N>2$, then $s = f(s_1,\dots,s_n)$ with $n \geq 1$ or $t = g(t_1,\dots,t_m)$ with $m \geq 1$. Let us suppose the left case $s = f(s_1,\dots,s_n)$ (the right case being equivalent).
    
    \begin{itemize}
        \item If $head(t) \neq f$, then the only generalization of $s$ and $t$ (up to variable renaming) is a variable $x$. Let us suppose that there exists a fcpos, then it will be $\varepsilon$ because $head(s) = f \neq head(t)$ and $SC(s)$ or $SC(t)$ will be not empty. It means that $x$ is not sc-preserving. Hence $CPG(s,t) = \emptyset$. 

        Conversely, if $CPG(s,t) = \emptyset$, then $x$ is not an sc-preserving generalization. By definition, it means that $SC(s) \ne \emptyset$ or $SC(t) \ne \emptyset$. In addition to the fact that $f\ne head(t)$, we deduce that $\varepsilon$ is a fcpos of $s$ and $t$. 
        
        \item If $head(t) = f$, let us introduce $t=f(t_1,\cdots,t_n)$. Then one of the following two possibilities applies:
    (1) either for all $i$, $CPG(s_i,t_i) \neq \emptyset$.  By induction hypothesis, no subterm pairs $(s_i,t_i)$ has a fcpos, thus there is no fcpos for $s$ and $t$. For $r_i \in CPG(s_i,t_i)$,  $f(r_1,\ldots,r_n)$ is in $CPG(s,t)$ and $CPG(s,t) \neq \emptyset$;  
        or (2)  for some $i$,  $CPG(s_i,t_i) = \emptyset$. By induction, there exists a fcpos $p_i$ for $s_i$ and $t_i$. Then, $i.p_i$ is a fcpos of $s$ and $t$ with either $SC(s_{|i.p_i})$ or $SC(t_{|i.p_i})$ not empty. Let us suppose there exists $r \in CPG(s,t)$. Either $r$ is a variable, and in this case, $r$ is not sc-preserving since $SC(s_{|i.p_i})$ or $SC(t_{|i.p_i})$ is not empty; or $r$ is of the form $f(r_1,\ldots,r_n)$, and in this case $r_i$ would be in $CPG(s_i,t_i)$. Contradiction with the hypothesis $CPG(s_i,t_i) = \emptyset$. And $CPG(s,t) = \emptyset$. 
    \end{itemize}
\qed 
\end{proof}

\section{Algorithm for sc-preserving generalizations}
\label{sec:algo_without_eqs}

Following~\cite{Cerna_Implem_UnitAU,CernaUnital}, generalization problems are represented as \emph{anti-unification triples} (AUTs), written $x : s \triangleq t$ for ground terms $s, t \in \termground$. The variable $x$ denotes an mgg of $s$ and $t$, witnessed by the substitutions $\{x \mapsto s\}$ and $\{x \mapsto t\}$. The \emph{size} of an AUT is $\size(x : s \triangleq t) = \size(s) + \size(t)$. 
Def.\ref{def:auts} introduces notations on sets of AUTs.

\begin{definition}
\label{def:auts}
Let $A$ be a subset of the set $\setOfAllAUTs = \mathcal{V} \times \mathcal{T}(\mathcal{F})^2$  of all AUTs.
\[
\begin{array}{lcl}
\idx(\activeAUTs) = \cup_{(x:s\triangleq t) \in \activeAUTs} \{x\} 
&
~~
&
\textsf{Pairs}(A) = \cup_{(x:s\triangleq t) \in \activeAUTs} \{(s,t)\}

\\
\sleft(\activeAUTs) = \cup_{(x:s\triangleq t) \in \activeAUTs} \{x \mapsto s\}
&
~~
& 
\sright(\activeAUTs) = \cup_{(x:s\triangleq t) \in \activeAUTs} \{x \mapsto t\}
\end{array}
\]
\end{definition}

A \emph{configuration} is a tuple $c = \langle \activeAUTs | S | \theta | x_0 \rangle$, where $\activeAUTs \subset \setOfAllAUTs$ is a set of so-called \emph{active} AUTs, $S \subset \setOfAllAUTs$ the store of so-called \emph{solved} AUTs, $\theta$ the accumulated index-variable mappings, and $x_0$ the root variable. We denote by $\sleft(c)$ and $\sright(c)$ the substitutions $\sleft(A\cup S)$ and $\sright(A\cup S)$ respectively. 

Starting from an initial configuration $\conf_0 = \langle \{ x_0: s_0 \triangleq t_0 \} | \emptyset | Id_{\V} | x_0 \rangle$ an anti-unification algorithm such as the ones from \cite{ALPUENTE2014,CernaUnital} computes a lgg of $s_0$ and $t_0$ via applying successive transformations from $\conf_0$ until reaching a configuration $c = \langle \emptyset | S | \theta | x_0 \rangle$ in which the set of active AUTs is empty. Once this is done, the lgg is obtained by applying the substitution $\theta$ to the root variable $x_0$ i.e.~$x_0\theta$ is a lgg of $s_0$ and $t_0$.
The corresponding witnessing substitutions are $\sigma_L = \sleft(c)$ and $\sigma_R = \sright(c)$, satisfying $x_0\theta\sigma_L = s_0$ and $x_0\theta\sigma_R = t_0$.

\begin{figure}[h]
\begin{mdframed}
\centering
\input{algos/without_eqs_v3}
\end{mdframed}
\caption{sc-preserving generalization rules (no equations)}
\label{fig:algo_sc-gneraliz_noE}
\end{figure}

On Fig.\ref{fig:algo_sc-gneraliz_noE}, we introduce the inference rules of our sc-preserving anti-unification algorithm.
The rule \textsf{Decompose} replaces in $A$ an AUT $x : f(\overline{s}) \triangleq f(\overline{t})$ with a set of smaller AUTs $x_i : s_i \triangleq t_i$, introducing fresh variables $x_i \not\in \idx(A \cup S) \cup \dom(\theta)$, and extending $\theta$ with $\sigma = \{x \mapsto f(\overline{x})\}$. The rule \textsf{Solve} transfers an AUT from $A$ to $S$ when $\head(s)\neq\head(t)$, no duplicate $(y:s\triangleq t)$ exists in $S$, and both terms are free of special constants ($SC(s)=SC(t)=\emptyset$). The rule \textsf{Recover} reuses a previously solved pair $(y:s\triangleq t)\in S$ by extending $\theta$ with $\{x\mapsto y\}$, enforcing variable sharing across identical subproblems.
It should be noted that if we remove the condition $SC(s)=SC(t)=\emptyset$ in the \textsf{Solve} rule on Fig.\ref{fig:algo_sc-gneraliz_noE}, then these rules correspond to the usual anti-unification rules \cite{ALPUENTE2014}.

\subsection{Examples and relation to conflict positions}

\begin{example}
We give below a derivation trace using the rules of Fig.\ref{fig:algo_sc-gneraliz_noE} from the initial configuration $\langle \{ x_0: f(\mathfrak{a},g(u,u)) \triangleq f(\mathfrak{a},g(v,v))\} \mid \emptyset \mid Id_\V \mid x_0\rangle$, by using the first capital letter of the name of the rules to identify them.
$$\resizebox{.8\textwidth}{!}{
\begin{prooftree}
    \hypo{
    \langle 
    \{ x_0: f(\mathfrak{a},g(u,u)) \triangleq f(\mathfrak{a},g(v,v))\} 
    \mid 
    \emptyset 
    \mid 
    Id_\V 
    \mid 
    x_0
    \rangle
    }
    \infer[left label=(\textsf{D})]1{
    \langle
    \{ x_1: \mathfrak{a}\triangleq \mathfrak{a}, x_2: g(u,u) \triangleq g(v,v) \}
    \mid 
    \emptyset 
    \mid 
    \{ x_0 \mapsto f(x_1,x_2)\}
    \mid 
    x_0
    \rangle 
    }
    \infer[left label=(\textsf{D})]1{
    \langle
    \{x_2: g(u,u) \triangleq g(v,v) \}
    \mid 
    \emptyset 
    \mid 
    \{ x_0 \mapsto f(x_1,x_2)\}\{ x_1 \mapsto \mathfrak{a}\} 
    \mid 
    x_0
    \rangle 
    }
    \infer[left label=(\textsf{D})]1{
    \langle
    \{x_3: u \triangleq v, x_4: u \triangleq v \}
    \mid 
    \emptyset 
    \mid 
    \{ x_0 \mapsto f(\mathfrak{a},x_2)\}\{ x_2 \mapsto g(x_3,x_4)\} 
    \mid 
    x_0
    \rangle 
    }
    \infer[left label=(\textsf{S})]1{
    \langle
    \{x_4: u \triangleq v \}
    \mid 
    \{x_3: u \triangleq v \} 
    \mid \{ x_0 \mapsto f(\mathfrak{a},g(x_3,x_4))\} 
    \mid 
    x_0
    \rangle 
    }
    \infer[left label=(\textsf{R})]1{
    \langle 
    \emptyset 
    \mid 
    \{ x_3: u \triangleq v \} 
    \mid 
    \{ x_0 \mapsto f(\mathfrak{a},g(x_3,x_4))\} \{ x_4 \mapsto x_3\} 
    \mid 
    x_0
    \rangle
    }
\end{prooftree}
}
$$

In the last step, we have $x_0\theta = f(\mathfrak{a},g(x_3,x_3))$ with  $\{ x_3 \mapsto u\}$ and $\{ x_3 \mapsto v\}$ as left and right instantiations.
\end{example}

\begin{example}\label{ex:sc-preserving-noE_fail}
Here, applying the rules of Fig.\ref{fig:algo_sc-gneraliz_noE} does not allow us to empty the set of active AUT, as no sc-preserving generalization exists. 
$$
\resizebox{.8\textwidth}{!}{%
    \begin{prooftree}
        \hypo{\langle \{ x_0: f(\mathfrak{a},g(\mathfrak{b},u)) \triangleq f(\mathfrak{a},g(v,\mathfrak{b}))\} \mid \emptyset \mid Id_\V \mid x_0\rangle}
        \infer[left label=(\textsf{D})]1{\langle\{ x_1: \mathfrak{a}\triangleq \mathfrak{a}, x_2: g(\mathfrak{b},u) \triangleq g(v,\mathfrak{b}) \}\mid \emptyset \mid \{ x_0 \mapsto f(x_1,x_2)\} \mid x_0\rangle }
        \infer[left label=(\textsf{D})]1{\langle\{x_2: g(\mathfrak{b},u) \triangleq g(v,\mathfrak{b}) \}\mid \emptyset \mid \{ x_0 \mapsto f(x_1,x_2)\}\{ x_1 \mapsto \mathfrak{a}\} \mid x_0\rangle }
        \infer[left label=(\textsf{D})]1{\langle\{x_3: \mathfrak{b} \triangleq v, x_4: u \triangleq \mathfrak{b} \}\mid \emptyset \mid \{ x_0 \mapsto f(\mathfrak{a},x_2)\}\{ x_2 \mapsto g(x_3,x_4)\} \mid x_0\rangle }
    \end{prooftree}
    }
$$
\end{example}

The examples above illustrate that all introduced AUTs correspond to pairs of terms $s,t$ located at the same position $p$ in both $s_0$ and $t_0$ and which have the same parent symbols at any position $q$ in $s_0$ and $t_0$ with $q \triangleleft p$.

\begin{lemma}
\label{lem:rules_reveal_common_structure}
There is a derivation from $c_0$ to $c=\langle \{x : s \triangleq t\} \cup \activeAUTs | S | \theta | x_0\rangle$ iff $\exists~p \in pos(s_0) \cap pos(t_0)$ s.t.~$s_0|_p = s$, $t_0|_p = t$, and for all $q \triangleleft p$, $\head(s_0|_q) = \head(t_0|_q)$.
\end{lemma}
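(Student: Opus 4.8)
The plan is to prove both directions by relating each active AUT to the position in $s_0$ and $t_0$ from which it originates. The only rule that creates new active AUTs is \textsf{Decompose}, which descends one level through a shared head symbol; \textsf{Solve} and \textsf{Recover} merely remove AUTs from the active set. So I would first establish, as an invariant preserved along every derivation, that each active AUT ``remembers'' such a common position, and then read off the two implications.

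For the forward direction ($\Rightarrow$), I would prove the following invariant by induction on the length of the derivation from $c_0$: for every reachable configuration $\langle A \mid S \mid \theta \mid x_0\rangle$ and every $(x : s \triangleq t) \in A$, there is a position $p \in \pos(s_0)\cap\pos(t_0)$ with $s_0|_p = s$, $t_0|_p = t$, and $\head(s_0|_q)=\head(t_0|_q)$ for all $q \triangleleft p$. The base case is $c_0$, whose single active AUT $x_0 : s_0 \triangleq t_0$ is witnessed by $p = \varepsilon$ (the prefix condition being vacuous). In the inductive step, \textsf{Solve} and \textsf{Recover} only shrink $A$, so the invariant is inherited by the surviving AUTs. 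The interesting case is \textsf{Decompose} applied to $x : f(\overline{s}) \triangleq f(\overline{t})$: by the induction hypothesis this AUT is witnessed by some $p$, so in particular $s_0|_p = f(\overline{s})$ and $t_0|_p = f(\overline{t})$ share head $f$; each child AUT $x_i : s_i \triangleq t_i$ is then witnessed by $p.i$, since $(s_0|_p)|_i = s_i$, $(t_0|_p)|_i = t_i$, and every $q \triangleleft p.i$ is either a strict prefix of $p$ (covered by the hypothesis) or equal to $p$ (where the heads agree, both being $f$). Applying the invariant to a derivation reaching a configuration in which $x:s\triangleq t$ is active yields the required $p$.

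For the backward direction ($\Leftarrow$), given $p$ with the stated properties, I would build a derivation that repeatedly applies \textsf{Decompose} along the path to $p$. Writing $p = i_1\cdots i_n$, I enumerate its strict prefixes in increasing order, $\varepsilon \triangleleft i_1 \triangleleft i_1 i_2 \triangleleft \cdots \triangleleft i_1\cdots i_{n-1}$, and decompose the active AUT sitting at each such prefix $q$ in turn. Each step is legal: at every $q \triangleleft p$ we have $\head(s_0|_q)=\head(t_0|_q)=f_q$, and since $p$ passes strictly below $q$ the symbol $f_q$ has arity $\geq 1$, so both $s_0|_q$ and $t_0|_q$ are genuine applications of $f_q$ to the same number of arguments, matching the premise of \textsf{Decompose}. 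Because this schedule only ever applies \textsf{Decompose} to the AUT on the path, the AUT at each next prefix stays available in the active set until it is decomposed; after the final step the active set contains $x : s_0|_p \triangleq t_0|_p = x : s \triangleq t$, as desired (and the degenerate case $p=\varepsilon$ is witnessed by the empty derivation, $c_0$ itself).

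The routine bookkeeping dominates; the one point deserving care is the applicability of \textsf{Decompose} in the backward construction, namely that head equality at a proper ancestor $q \triangleleft p$, together with $p \in \pos(s_0)\cap\pos(t_0)$, forces $s_0|_q$ and $t_0|_q$ to be applications of the same symbol with the same arity and with a valid child index $i_{k+1}$ at each step. This is exactly what lets the descent continue until $p$ is reached, and it is where the hypothesis $\head(s_0|_q)=\head(t_0|_q)$ for all $q \triangleleft p$ is used in full.
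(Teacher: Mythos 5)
Your proof is correct and follows essentially the same approach as the paper, which only gives a two-line sketch resting on the same key observation: only \textsf{Decompose} introduces new active AUTs, each application descending one level through identical head symbols, with $c_0$ itself witnessed by $p=\varepsilon$. Your write-up is in fact more complete than the paper's sketch, since you explicitly state the invariant for the forward direction and carry out the backward construction (scheduling \textsf{Decompose} along the strict prefixes of $p$, and justifying its applicability at each ancestor), both of which the paper leaves implicit.
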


\begin{sketchproof*}
This trivially holds for $c=c_0$ with the position being the root $\varepsilon$.
For any other AUT, we may simply remark that only \textsf{Decompose} introduces new AUTs, each step descending one level with identical heads.
\qed 
\end{sketchproof*}

On Def.\ref{def:final_config} we characterize final configurations of our algorithm as those which are irreducible by the rules of Fig.\ref{fig:algo_sc-gneraliz_noE}.

\begin{definition}
\label{def:final_config}
Let  $\conf = \langle \activeAUTs | S | \theta | x_0 \rangle$ be a configuration reachable from $c_0$. $c$ is said to be \emph{final} if $c$ is irreducible by the rules of Fig.\ref{fig:algo_sc-gneraliz_noE}.\\  Moreover if $\activeAUTs = \emptyset$, $\conf$ is said to be \emph{solved} final configuration.
\end{definition}

In Lem.\ref{lem:final_configuration_and_conflicts}, we relate final configurations that are reachable from $c_0$ to the failure conflict positions of the initial terms $s_0$ and $t_0$.

\begin{lemma}
\label{lem:final_configuration_and_conflicts}
For any final configuration $\conf = \langle \activeAUTs | S | \theta | x_0 \rangle$ we have:
\[
\textsf{Pairs}(\activeAUTs) \subseteq \bigcup_{p \in fcpos(s_0,t_0)} \{(s_0|_p,t_0|_p)\}
\]
\end{lemma}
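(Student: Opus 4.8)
The plan is to reason about one active AUT at a time: it suffices to show that for each $(x : s \triangleq t) \in \activeAUTs$ in a final configuration $\conf$, the pair $(s,t)$ coincides with $(s_0|_p, t_0|_p)$ for some failure conflict position $p$. Taking the union over all active AUTs then yields the claimed inclusion on $\textsf{Pairs}(\activeAUTs)$.

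First I would locate the position. Since $\conf$ is reachable from $\conf_0$ (by Def.~\ref{def:final_config}) and contains the active AUT $x : s \triangleq t$, Lem.~\ref{lem:rules_reveal_common_structure} provides a position $p \in \pos(s_0) \cap \pos(t_0)$ with $s_0|_p = s$, $t_0|_p = t$, and $\head(s_0|_q) = \head(t_0|_q)$ for every $q \triangleleft p$. This already establishes the ancestor-agreement half of the conflict-position condition of Def.~\ref{def:conflict-pos}.

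Next I would exploit finality to pin down the remaining conditions, by enumerating why each rule of Fig.~\ref{fig:algo_sc-gneraliz_noE} fails on this AUT. Since \textsf{Decompose} does not apply, the two terms cannot share a head symbol (including the arity-$0$ case, where a common constant head would trigger \textsf{Decompose} with $n=0$); hence $\head(s) \neq \head(t)$, i.e.\ $\head(s_0|_p) \neq \head(t_0|_p)$. Combined with the ancestor agreement from Lem.~\ref{lem:rules_reveal_common_structure}, this makes $p$ a conflict position. Since \textsf{Recover} does not apply, no $(y : s \triangleq t)$ lies in $S$; and since \textsf{Solve} does not apply while its head-mismatch and store-freshness premises are already met, the only remaining premise that can fail is $SC(s) = SC(t) = \emptyset$. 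Therefore $SC(s_0|_p) \neq \emptyset$ or $SC(t_0|_p) \neq \emptyset$, which upgrades $p$ from a conflict position to a \emph{failure} conflict position per Def.~\ref{def:conflict-pos}.

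Putting these together gives $(s,t) = (s_0|_p, t_0|_p)$ with $p \in fcpos(s_0,t_0)$, so each element of $\textsf{Pairs}(\activeAUTs)$ lies in the displayed union. I expect no serious obstacle here, as the argument is essentially a careful reading of the rule side-conditions against Def.~\ref{def:conflict-pos}; the one point requiring attention is the exhaustive case analysis on why \textsf{Solve} fails, where one must confirm that its head-inequality and store-freshness premises are already guaranteed by the failure of \textsf{Decompose} and \textsf{Recover}, so that the special-constant premise is forced to be the culprit.
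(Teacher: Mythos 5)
Your proposal is correct and follows essentially the same route as the paper's own (sketched) proof: use Lem.~\ref{lem:rules_reveal_common_structure} to place each active AUT at a common position of $s_0$ and $t_0$ with agreeing ancestor heads, then read off from the failure of \textsf{Decompose} that $\head(s)\neq\head(t)$ and from the joint failure of \textsf{Solve} and \textsf{Recover} that $SC(s)\neq\emptyset$ or $SC(t)\neq\emptyset$, making the position a failure conflict position. Your version merely spells out the case analysis on \textsf{Solve}'s premises (including the arity-$0$ case of \textsf{Decompose}) more explicitly than the paper's sketch.
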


\begin{sketchproof*}
Let $\conf = \langle \activeAUTs | S | \theta | x_0 \rangle$ be a final configuration.
For $x:s \triangleq t$ in $\activeAUTs$, $\head(s) \neq \head(t)$ (otherwise \textsf{Decompose} could be applied) and $SC(s) \neq \emptyset$ or $SC(t) \neq \emptyset$ (otherwise \textsf{Solve} or \textsf{Recover} could be applied). This combined with Lem.\ref{lem:rules_reveal_common_structure} implies that $\exists p \in fcpos(s_0,t_0)$ s.t.~$s = s_0|_p$ and $t = t_0|_p$. 
\qed 
\end{sketchproof*}

\subsection{Properties}

Termination of our anti-unification algorithm can be directly proved via reasoning on a decreasing multiset measure.

\begin{theorem}
[Termination]
\label{thm:syntactic-termination}
Derivations from $\conf_0$ applying the rules of Fig.\ref{fig:algo_sc-gneraliz_noE} are finite.
\end{theorem}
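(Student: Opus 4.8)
The plan is to exhibit a well-founded measure on configurations that strictly decreases with every rule application. The natural candidate is a multiset measure built from the sizes of the active AUTs, since only \textsf{Decompose} creates new AUTs, and it does so by breaking one AUT into strictly smaller pieces.

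Concretely, to each configuration $\conf = \langle \activeAUTs \mid S \mid \theta \mid x_0 \rangle$ I would associate the multiset
\[
\M(\conf) = \leftmultiset \size(x : s \triangleq t) \mid (x : s \triangleq t) \in \activeAUTs \rightmultiset \in \multisets,
\]
that is, the multiset of sizes $\size(s)+\size(t)$ of the active AUTs only (the store $S$ and $\theta$ play no role). I would order these multisets by the standard Dershowitz--Manna multiset extension $\ordmultisets$ of the usual order on $\N$, which is well-founded because $>$ on $\N$ is.

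The key step is then to verify, rule by rule, that each transition yields $\M(\conf) \ordmultisets \M(\conf')$ where $\conf'$ is the resulting configuration. For \textsf{Solve} and \textsf{Recover} this is immediate: both remove one AUT from $\activeAUTs$ and add nothing to it, so the multiset strictly decreases by deleting one element. For \textsf{Decompose}, the AUT $x : f(\overline{s}) \triangleq f(\overline{t})$ of size $1 + \sum_{i} \size(s_i) + 1 + \sum_i \size(t_i)$ is replaced by the $n$ AUTs $x_i : s_i \triangleq t_i$, each of size $\size(s_i) + \size(t_i)$. Since $\sum_i \bigl(\size(s_i)+\size(t_i)\bigr) < \size(f(\overline{s}))+\size(f(\overline{t}))$ (the two root symbols contribute $2$ to the left-hand size), the single removed element strictly dominates the finite collection of smaller elements it is replaced by, which is exactly what $\ordmultisets$ requires. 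Hence every rule strictly decreases $\M$.

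Because $\ordmultisets$ is well-founded, no infinite descending chain exists, so no infinite derivation from $\conf_0$ is possible, establishing termination. I do not expect a genuine obstacle here; the only point requiring minor care is the \textsf{Decompose} case when $n = 0$ (a nullary $f$, i.e.\ an ordinary constant with matching heads), where the AUT is simply removed and replaced by the empty collection --- this still strictly decreases the multiset and is in fact covered by the same argument. I would state the measure and the multiset order first, then dispatch the three rules in the order \textsf{Solve}, \textsf{Recover}, \textsf{Decompose}, concluding by invoking well-foundedness.
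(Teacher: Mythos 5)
Your proposal is correct and follows essentially the same argument as the paper: the same multiset measure (sizes of active AUTs), the same multiset ordering, and the same rule-by-rule case analysis (\textsf{Solve}/\textsf{Recover} delete an element, \textsf{Decompose} replaces one element by strictly smaller ones), concluding by well-foundedness. Your explicit remark about the $n=0$ case is a minor addition the paper subsumes implicitly by allowing arity $n \geq 0$.
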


\begin{proof}
Let us first define, for a configuration $\conf = \langle \activeAUTs | S | \theta | x_0 \rangle$, the multiset~\footnote{ Multisets are denoted $\{\!\!\{\,\cdots\,\}\!\!\}$ when explicitly listing elements by repeating them as many times as their multiplicity.} 
\[
\mu(\conf) = \{\!\!\{ \size(x:s\triangleq t) | x:s\triangleq t \in \activeAUTs \}\!\!\}
\]
which cumulates the sizes of all AUTs in the active set $\activeAUTs$. Let $\conf=\langle \activeAUTs | S | \theta | x_0 \rangle$ and $\conf'=\langle \activeAUTs' | S' | \theta' | x_0 \rangle$ be configurations where $\conf'$ is obtained from $\conf$ by a single application of a rule of Fig.~\ref{fig:algo_sc-gneraliz_noE}. 
We show that\footnote{For two multisets $M$ and $N$, $M >_{mul} N$ iff 
$N$ can be obtained from $M$ by iteratively replacing one element of $M$ by any number of smaller elements.} $\mu(\conf) >_{mul} \mu(\conf')$ by case analysis. If \textsf{Solve} or \textsf{Recover} are applied, some $x:s\triangleq t$  is removed from $A$. As $\activeAUTs' \subsetneq \activeAUTs$, we directly obtain:
    $
        \mu(\conf') = \mu(\conf) \setminus \leftmultiset \size(x:s\triangleq t)\rightmultiset
    $. Therefore, $\mu(\conf) >_{mul} \mu(\conf')$. If \textsf{Decompose} is applied, given $f \in \mathcal{F}$ of arity $n \geq 0$, some $x: f(\overline{s})\triangleq f(\overline{t})\in \activeAUTs$ is replaced by $\bigcup_{i \in [1,n]} \{x_i: s_i\triangleq t_i\}$.
    This gives rise to the following equality:
    $$
\mu(\conf') 
= 
\left(
\mu(\conf) \setminus \leftmultiset \size(x: f(\overline{s})\triangleq f(\overline{t}))\rightmultiset
\right)
~\cup~
\bigcup_{i \in [1..n]}
\leftmultiset \size(x_i:s_i\triangleq t_i) \rightmultiset.
    $$

We get $\mu(\conf) >_{mul} \mu(\conf')$ as for each $i$, 

$\size(x_i:s_i\triangleq t_i) < \size(x: f(\overline{s})\triangleq f(\overline{t}))$.

\noindent As $>_{mul}$ is a well-founded order, any derivation under rules of Fig.~\ref{fig:algo_sc-gneraliz_noE} is finite.

\qed 
\end{proof}

If we note $\langle A_0|S_0|\theta_0|x_0\rangle$ an initial configuration $c_0 = \langle x_0:s_0\triangleq t_0|\emptyset|Id_\V|x_0\rangle$ with ground terms $s_0$ and $t_0$, we have $\dom(\sleft(c_0)) = \dom(\sright(c_0)) = \idx(A_0 \cup S_0) =
\var(x_0\theta_0)$, all equal to $\{x_0\}$. Moreover, $\vran(\theta)=\var(x_0\theta) \setminus \{x_0\}$. These properties are invariant on all the reachable configurations:

\begin{lemma}
\label{lem:invariant_in_configurations}
If $c = \langle A|S | \theta | x_0 \rangle$ derives from $c_0$ via the rules of Fig.\ref{fig:algo_sc-gneraliz_noE} then:
\begin{itemize}
    \item $\dom(\sleft(c)) = \dom(\sright(c)) = \idx(A \cup S) = \var(x_0\theta)$;
    \item $\vran(\theta)=\var(x_0\theta) \setminus \{x_0\}$;
    \item $x_0\theta$ is a generalization of $s_0$ and $t_0$ with respective witnessing substitutions 
$\sleft(\conf)$ and $\sright(\conf)$.
\end{itemize}
\end{lemma}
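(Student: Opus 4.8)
The plan is to proceed by induction on the length of the derivation from $\conf_0$ to $\conf$. The base case $\conf = \conf_0$ is exactly the computation carried out in the paragraph preceding the statement, so I take it as established. For the inductive step I assume the three clauses hold for some reachable $\conf = \langle \activeAUTs | \solvedAUTs | \theta | x_0\rangle$ and verify them for any $\conf'$ obtained by a single application of \textsf{Decompose}, \textsf{Solve}, or \textsf{Recover}. Before the case analysis I record two facts used throughout. First, because all index variables introduced are fresh they are pairwise distinct, and the left/right terms of an AUT are ground (hence never equal to their index), so $\sleft$ and $\sright$ contribute exactly one binding per AUT and $\dom(\sleft(\conf)) = \dom(\sright(\conf)) = \idx(\activeAUTs\cup\solvedAUTs)$ holds automatically; thus the first clause reduces to maintaining $\idx(\activeAUTs\cup\solvedAUTs) = \var(x_0\theta)$. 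Second, the indices of the AUTs in $\activeAUTs\cup\solvedAUTs$ remain disjoint from $\dom(\theta)$ (an index enters $\dom(\theta)$ precisely when its AUT leaves $\activeAUTs$), which I use to evaluate the compositions cleanly.

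For \textsf{Solve}, the migrated AUT keeps $\activeAUTs\cup\solvedAUTs$ and $\theta$ literally unchanged, so all three clauses are immediate. For \textsf{Decompose} on $x:f(\overline s)\triangleq f(\overline t)$, I extend $\theta$ by $\{x\mapsto f(\overline x)\}$ with $\overline x$ fresh; since $x\in\var(x_0\theta)$ and $x\notin\dom(\theta)$, I compute $\var(x_0\theta') = (\var(x_0\theta)\setminus\{x\})\cup\{x_1,\dots,x_n\}$, which matches the update to $\idx(\activeAUTs\cup\solvedAUTs)$, while an analogous computation yields $\vran(\theta') = (\vran(\theta)\setminus\{x\})\cup\{x_1,\dots,x_n\}$, equal to $\var(x_0\theta')\setminus\{x_0\}$ by the induction hypothesis. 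For the witnessing clause I show $x_0\theta'\sleft(\conf') = x_0\theta\sleft(\conf)$: replacing each $x$ by $f(\overline x)$ and then sending $x_i\mapsto s_i$ rebuilds $f(\overline s)$, exactly the image $\sleft(\conf)$ assigned to $x$, while freshness of $\overline x$ guarantees these bindings touch nothing else; the $\sright$ side is symmetric.

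The delicate case is \textsf{Recover}, where $x:s\triangleq t$ is discarded and $\theta' = \theta\{x\mapsto y\}$ for the stored $(y:s\triangleq t)\in\solvedAUTs$. Here $y$ already occurs in $x_0\theta$ (as $y\in\idx(\solvedAUTs)\subseteq\var(x_0\theta)$), so $\var(x_0\theta') = \var(x_0\theta)\setminus\{x\} = \idx(\activeAUTs'\cup\solvedAUTs')$. The witnessing equalities are preserved because $x$ and $y$ label the \emph{same} pair, hence $\sleft(\conf)$ sends both to $s$ and $\sright(\conf)$ sends both to $t$; therefore rerouting $x$ through $y$ before applying the store substitution leaves $x_0\theta\sleft(\conf) = s_0$ and $x_0\theta\sright(\conf) = t_0$ intact. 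I expect the main obstacle to be the $\vran$ clause in this case: the computation gives $\vran(\theta') = (\vran(\theta)\setminus\{x\})\cup\{y\}$, and matching this against $\var(x_0\theta')\setminus\{x_0\}$ requires $y\neq x_0$. To secure this I will add a short auxiliary observation that after the first rule application the root variable has either been decomposed (so $x_0\in\dom(\theta)$, whence $x_0\notin\idx(\activeAUTs\cup\solvedAUTs)$) or solved with $\activeAUTs$ emptied (which forbids any subsequent \textsf{Recover}); in either case the stored index $y$ cannot equal $x_0$, closing the argument.
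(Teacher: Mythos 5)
Your proof is correct and follows essentially the same route as the paper's: induction on the length of the derivation, a case analysis over \textsf{Solve}, \textsf{Recover}, and \textsf{Decompose}, and the same bookkeeping of $\idx$, $\dom$, $\vran$, and the witnessing substitutions. You are in fact slightly more careful than the paper in the \textsf{Recover} case: the paper silently uses $\vran(\theta') = \vran(\theta)\setminus\{x\}$, which requires the stored index $y$ to differ from $x_0$, and your auxiliary observation (that $x_0$'s AUT can only leave $A$ at the very first step, either by \textsf{Decompose} or by a \textsf{Solve} that empties $A$) establishes this explicitly.
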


\begin{proof}
Let us reason by induction on the derivations.

$\bullet$ \textbf{Base case:} 
The initial configuration $\conf_0$ is such that $A_0 = \{x_0:s_0 \triangleq t_0\}$, $S_0 = \emptyset$ and $\theta_0 = Id_V$ and thus $\sleft(\conf_0) = \{x_0 \mapsto s_0\}$ and $\sright(\conf_0) = \{x_0 \mapsto t_0\}$.
Therefore:
\begin{itemize}
    \item $\dom(\sleft(\conf_0)) = \dom(\sright(\conf_0)) = \{x_0\}$ and $\idx(A_0 \cup S_0) = \idx(A_0) = \{x_0\}$ and $\var(x_0\theta_0) = \var(x_0) = \{x_0\}$
    \item $\vran(\theta_0)=\var(x_0\theta_0) \setminus \{x_0\}$
    \item $x_0\theta_0=x_0$ is a generalization of $s_0$ and $t_0$ (it is the mgg) with witnessing substitutions $\sleft(\conf_0)$ and $\sright(\conf_0)$
\end{itemize}

$\bullet$ \textbf{Induction:} 
Given a configuration $\conf$, we denote $\sigmal = \sleft(\conf)$ and $\sigmar = \sright(\conf)$. 
Let us suppose the the property holds for a certain configuration $\conf$ reachable from $\conf_0$. Let then suppose that $\conf'$ is reached from $\conf$ by applying a single step.
This step must correspond to the application of either:
\begin{itemize}

    \item \textsf{Solve}: there exist an AUT $a$ s.t.~$A' = A \setminus \{ a \}$ and $S' = S \cup \{a\}$ and we have $\theta'=\theta$, $\sigmal' = \sigmal$ and $\sigmar' = \sigmar$ therefore the three items hold because $A' \cup S' = A \cup S$ and the substitutions are unchanged.
    
    \item \textsf{Recover}: there are two AUTs $x:s\triangleq t$ and $y:s\triangleq t$ s.t.~$a' \in S$ and $A' = A \setminus \{x:s\triangleq t\}$, $\theta' = \theta\{x \mapsto y\}$ (with $x \not\in \dom(\theta)$ and $x \not\in \vran(\theta)$) and all others unchanged.
    Then:
    \begin{itemize}
        \item $\dom(\sigmal') = \dom(\sigmal)\setminus \{ x\}$ and $\dom(\sigmar') = \dom(\sigmar)\setminus \{ x\}$
        \item $\idx(\activeAUTs' \cup S') = (\idx(\activeAUTs \cup S)\setminus \{ x\}$ and $\var(x_0\theta') = \var(x_0\theta)\setminus \{ x\}$
    \end{itemize}
    Thus, by the induction hypothesis, the first item holds.
    As we also have $\vran(\theta') = \vran(\theta)\setminus \{ x\}$, the second item also holds by induction. Then:
    \begin{itemize}
        \item $x\{x \mapsto y\}\sigmal' = y \sigmal' = s = y \sigmal = x\sigmal$
        \item $\forall z \in \var(x_0\theta) \setminus \{x\}, z\{x \mapsto y\} = z$ and thus $z\{x \mapsto y\}\sigmal' = z \sigmal' = z \sigmal$ since $\sigmal'$ and $\sigmal$ differ only on $x$
    \end{itemize}
    Hence $\{x \mapsto y\}\sigmal' = \sigmal$. The previous two points also hold for $\sigmar$ and $\sigmar'$. Therefore, as $\var(x_0\theta') = \dom(\sigmal') = \dom(\sigmar')$ by the first item we have:
    \begin{itemize}
        \item $x_0\theta'\sigmal' = x_0\theta\{x \mapsto y\}\sigmal' = x_0\theta\sigmal$
        \item $x_0\theta'\sigmar' = x_0\theta\{x \mapsto y\}\sigmar' = x_0\theta\sigmar$
    \end{itemize}
    Thus, by the induction hypothesis, the third item holds

    \item \textsf{Decompose}: an AUT $a = x:f(\overline{s})\triangleq f(\overline{t}) \in A$ is replaced by $n \geq 0$ AUTs $a_i = x_i:s_i\triangleq t_i$ with $n$ fresh variables $\overline{x}$ so that:
    \begin{itemize}
        \item $A' = (A \setminus \{a\}) \cup \{a_1,\cdots,a_n\}$ and $S'=S$
        \item $\theta' = \theta\{x \mapsto f(\overline{x})\}$
        \item $\sigmal' = (\sigmal \setminus\{x\mapsto f(\overline{x})\}) \cup \{x_1\mapsto s_1,\ldots,x_n\mapsto s_n\}$
        \item $\sigmar' = (\sigmar \setminus\{x\mapsto f(\overline{x})\}) \cup \{x_1\mapsto t_1,\ldots,x_n\mapsto t_n\}$
    \end{itemize}
    We then have:
    \begin{itemize}
        \item $\idx(\activeAUTs'\cup S') = (\idx(\activeAUTs \cup S)\setminus \{ x\} )\cup \{x_1,\hdots,x_n \}$
        \item $\dom(\sigmal') = (\dom(\sigmal)\setminus \{ x\}) \cup \{x_1,\hdots,x_n \}$
        \item $\dom(\sigmar')) = (\dom(\sigmar)\setminus \{ x\}) \cup \{x_1,\hdots,x_n \}$
        \item $\vran(\theta') = (\vran(\theta)\setminus \{ x\}) \cup \{x_1,\hdots,x_n \}$
        \item $\var(x_0\theta') = (\var(x_0\theta)\setminus \{ x\}) \cup \{x_1,\hdots,x_n \}$
    \end{itemize}
    Thus, by the induction hypothesis, the first and second items hold. Also:
    \begin{itemize}
        \item $x\{x \mapsto f(\overline{x})\}\sigmal' = f(\overline{x})\sigmal' = f(x_1\sigmal',\cdots,x_n\sigmal') = f(\overline{s}) = x\sigmal$
        \item $\forall z \in \var(x_0\theta) \setminus \{x\}, z\{x \mapsto f(\overline{x})\} = z$ and thus $z\{x \mapsto f(\overline{x})\}\sigmal' = z \sigmal' = z \sigmal$ since $\sigmal'$ and $\sigmal$ differ only on $x$
    \end{itemize}
   Hence, $\{x \mapsto f(\overline{x})\}\sigmal' = \sigmal$. The previous two points also hold for $\sigmar$ and $\sigmar'$. Therefore, as $\var(x_0\theta') = \dom(\sigmal') = \dom(\sigmar')$ by the first item we have:
    \begin{itemize}
        \item $x_0\theta'\sigmal' = x_0\theta\{x \mapsto f(\overline{x})\}\sigmal' = x_0\theta\sigmal$
        \item $x_0\theta'\sigmar' = x_0\theta\{x \mapsto f(\overline{x})\}\sigmar' = x_0\theta\sigmar$
    \end{itemize}
    Thus, by the induction hypothesis the third item holds.
\end{itemize}
\end{proof}

\begin{lemma}[SC-preservation]\label{lem:noeq_sc_preservation}
Let $\conf$ be a solved configuration derived from $\conf_0$ by successive applications of the rules of Fig.~\ref{fig:algo_sc-gneraliz_noE}, then $x_0\theta \in CPG(s_0,t_0)$.
\end{lemma}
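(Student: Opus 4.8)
The plan is to leverage the structural invariant already granted by Lemma~\ref{lem:invariant_in_configurations} and to complement it with one additional invariant on the store $S$ recording the absence of special constants. First I would instantiate Lemma~\ref{lem:invariant_in_configurations} at the solved configuration $\conf = \langle \emptyset | S | \theta | x_0\rangle$. Its third item gives that $x_0\theta$ is a generalization of $s_0$ and $t_0$ with witnessing substitutions $\sigmal = \sleft(\conf)$ and $\sigmar = \sright(\conf)$, i.e.~$x_0\theta\,\sigmal = s_0$ and $x_0\theta\,\sigmar = t_0$. Since $\conf$ is solved, $\activeAUTs = \emptyset$, so $\sigmal = \sleft(S)$ and $\sigmar = \sright(S)$. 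Because the terms carried by AUTs are ground (hence never equal to their index variable), each index lies in the domain and the ranges are exactly the left and right components of the stored AUTs: $\ran(\sigmal) = \{ s \mid (x:s\triangleq t)\in S\}$ and $\ran(\sigmar) = \{ t \mid (x:s\triangleq t)\in S\}$.

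To conclude $x_0\theta \in CPG(s_0,t_0)$ it then remains, by Def.~\ref{def:sc_preserving_generalization}, to establish the sc-preservation condition $SC(\ran(\sigmal)) = SC(\ran(\sigmar)) = \emptyset$. For this I would prove, by induction on the derivation from $\conf_0$, the auxiliary invariant that every $(x:s\triangleq t)\in S$ satisfies $SC(s) = SC(t) = \emptyset$. The base case is immediate, as $S_0 = \emptyset$. For the inductive step I observe that, of the three rules of Fig.~\ref{fig:algo_sc-gneraliz_noE}, only \textsf{Solve} inserts an AUT into the store, and its side-condition is precisely $SC(s) = SC(t) = \emptyset$; the rule \textsf{Recover} only reads from $S$ (extending $\theta$ with $\{x\mapsto y\}$) and \textsf{Decompose} leaves $S$ untouched. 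Hence no rule can ever place, or leave, a special-constant-bearing pair in $S$, and the invariant propagates along the whole derivation.

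From this invariant it follows directly that $SC(\ran(\sigmal)) = \bigcup_{(x:s\triangleq t)\in S} SC(s) = \emptyset$, and symmetrically $SC(\ran(\sigmar)) = \emptyset$. Combined with $x_0\theta\,\sigmal = s_0$ and $x_0\theta\,\sigmar = t_0$, the two clauses of Def.~\ref{def:sc_preserving_generalization} are satisfied with witnesses $\sigmal$ and $\sigmar$, giving $x_0\theta \in CPG(s_0,t_0)$.

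I do not expect a genuine obstacle here: the proof is essentially bookkeeping on top of Lemma~\ref{lem:invariant_in_configurations}. The only point requiring a little care is the store invariant, and specifically the remark that the sc-freeness of a stored pair is fixed once and for all at the single instant the pair crosses from $\activeAUTs$ to $S$ via \textsf{Solve}, so that the side-condition of that rule transfers verbatim to the ranges of the final witnessing substitutions of a solved configuration.
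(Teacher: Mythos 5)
Your proposal is correct and follows essentially the same route as the paper's proof: both invoke Lemma~\ref{lem:invariant_in_configurations} to get that $x_0\theta$ is a generalization witnessed by $\sleft(\conf)$ and $\sright(\conf)$, observe that in a solved configuration these substitutions are drawn entirely from the store $S$, and conclude sc-preservation from the fact that $S$ contains only special-constant-free AUTs. The only difference is that the paper dispatches the store invariant with the phrase ``by construction'' (i.e., only \textsf{Solve} inserts into $S$, under the side-condition $SC(s)=SC(t)=\emptyset$), whereas you spell out that same observation as an explicit induction on the derivation — a welcome but inessential elaboration.
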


\begin{proof}
\noindent 
By Lem.\ref{lem:invariant_in_configurations}, $x_0\theta$ is a generalization of $s_0$ and $t_0$. Let us prove that it is sc-preserving.
By Lem.\ref{lem:invariant_in_configurations},
$\dom(\sleft(\conf))=\dom(\sright(\conf))=\idx(\activeAUTs\cup S)=\idx(S)$ since $\activeAUTs$ is empty for a solved configuration. 
Each binding in $\sleft(\conf)$ (resp.\ $\sright(\conf)$) is of the form $x\mapsto s$ (resp.\ $x\mapsto t$) with $x:s\triangleq t\in S$.
As by construction, the set of solved AUTs only contains special-constant–free AUTs, we have
$SC(\ran(\sleft(\conf)))=SC(\ran(\sright(\conf)))=\emptyset$ so that $x_0\theta\in CPG(s_0,t_0)$.
\qed
\end{proof}

\begin{theorem}
[Soundness]
\label{th:noeq_soundness}
Let $\conf$ be a solved configuration derived from $\conf_0$ by successive applications of the rules of Fig.~\ref{fig:algo_sc-gneraliz_noE}, then $x_0\theta$ is a \specialConstantPreservingLgg\, of $s_0$ and $t_0$.
\end{theorem}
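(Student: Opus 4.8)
The plan is to establish two facts about the extracted term $x_0\theta$: that it belongs to $CPG(s_0,t_0)$, and that it is maximal there with respect to $\preceq$. The first is already granted by Lem.\ref{lem:noeq_sc_preservation}, which in particular shows $CPG(s_0,t_0)\neq\emptyset$. For the second I would not argue maximality inside $CPG(s_0,t_0)$ directly; instead I would prove the stronger statement that $x_0\theta$ is a \emph{syntactic} lgg of $s_0$ and $t_0$, i.e.\ maximal among \emph{all} generalizations. Since $CPG(s_0,t_0)$ is a subset of the set of all generalizations and contains $x_0\theta$, maximality among all generalizations immediately yields maximality within $CPG(s_0,t_0)$, which is exactly the \specialConstantPreservingLgg\ property of Def.\ref{def:sc-preserving-lgg}.

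The core step is therefore to show that $x_0\theta$ is a genuine lgg. The key observation, already noted after Fig.\ref{fig:algo_sc-gneraliz_noE}, is that dropping the side condition $SC(s)=SC(t)=\emptyset$ from \textsf{Solve} recovers exactly the standard syntactic anti-unification rules, whose \textsf{Solve} is strictly more permissive than ours. Consequently, every rule application in our derivation from $c_0$ to the solved configuration $\conf$ is also a valid application of the corresponding standard rule, so the very same sequence of configurations is a standard derivation ending in a solved configuration. The classical soundness of syntactic anti-unification~\cite{Unification_revisited,ALPUENTE2014,CernaUnital} then guarantees that the term $x_0\theta$ read off from a solved configuration is an lgg of $s_0$ and $t_0$. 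I would reinforce this with the remark that, since $x_0\theta\in CPG(s_0,t_0)$ forces $CPG(s_0,t_0)\neq\emptyset$, Lem.\ref{lem:failure_conflict_pos_charac} gives $fcpos(s_0,t_0)=\emptyset$; hence every conflict position revealed along the derivation (via Lem.\ref{lem:rules_reveal_common_structure}) is solvable and the blocking condition of our \textsf{Solve} never fires, making the identification with a standard derivation exact rather than a mere embedding.

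Putting the pieces together gives the result: $x_0\theta\in CPG(s_0,t_0)$ by Lem.\ref{lem:noeq_sc_preservation}, and $x_0\theta$ is maximal w.r.t.\ $\preceq$ among all generalizations, hence maximal in the subset $CPG(s_0,t_0)$; by Def.\ref{def:sc-preserving-lgg} it is therefore an \specialConstantPreservingLgg\ of $s_0$ and $t_0$.

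I expect the main obstacle to be the rigorous justification that a solved configuration yields a \emph{least general} (most specific) generalization rather than an arbitrary one. This is precisely where the \textsf{Recover} rule matters: its forced reuse of a stored variable for a repeated conflict pair is what prevents over-generalization, and this is the content invoked from the standard theory. If a self-contained argument were preferred to citing that theory, I would instead prove lgg-ness directly by a conflict-position analysis, using Lem.\ref{lem:rules_reveal_common_structure} and Lem.\ref{lem:invariant_in_configurations} to show that $x_0\theta$ carries a variable exactly at each solvable conflict position, with identical conflict pairs sharing a variable through \textsf{Recover}, and then verifying that any generalization $r$ of $s_0,t_0$ satisfies $r\preceq x_0\theta$.
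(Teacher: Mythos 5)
Your proposal is correct, but it takes a genuinely different route from the paper's own proof. The paper argues maximality \emph{within} $CPG(s_0,t_0)$ directly, by contradiction: assuming an sc-preserving generalization $r$ with $x_0\theta\rho = r$ for a non-renaming $\rho$, it distinguishes whether $\rho$ merges two variables of $x_0\theta$ or sends some variable to a non-variable term, and in both cases uses Lem.~\ref{lem:invariant_in_configurations} and Lem.~\ref{lem:rules_reveal_common_structure} to trace the variables back to solved AUTs in the store, contradicting the side conditions of \textsf{Recover} (no duplicate solved AUTs) in the first case and of \textsf{Solve} (distinct head symbols) in the second. You instead prove the stronger fact that $x_0\theta$ is a full syntactic lgg, by observing that every derivation under the restricted rules is literally a derivation of the standard rule system (your \textsf{Solve} applications satisfy the standard \textsf{Solve}'s weaker preconditions, and \textsf{Decompose}/\textsf{Recover} coincide), invoking classical soundness of syntactic anti-unification, and then passing from maximality among all generalizations to maximality in the subset $CPG(s_0,t_0)$ --- which is indeed the trivial direction, the paper's Prop.~\ref{prop:sc-preserving-lgg-lgg} being the nontrivial converse. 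Your approach buys brevity and modularity, at the cost of resting on the precise formulation of the cited classical result (namely that \emph{any} derivation reaching an empty active set yields an lgg, independently of rule-application order); the paper's argument is self-contained, exercises its own invariants, and, combined with Prop.~\ref{prop:sc-preserving-lgg-lgg}, ultimately yields the same stronger conclusion that you establish en route. Your auxiliary remark via Lem.~\ref{lem:failure_conflict_pos_charac} (no failure conflict positions, so the SC-condition of \textsf{Solve} never blocks) is sound but not needed: the simulation of your rules by the standard ones holds unconditionally.
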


\begin{proof}
Let $\conf = \langle \emptyset | S | \theta |  x_0\rangle$ reached from $\conf_0$. By Lem.\ref{lem:noeq_sc_preservation}, $x_0\theta\in CPG(s_0,t_0)$. Assume that there is an sc-preserving lgg $r$ and a substitution $\rho$ with $x_0\theta\rho=r$ and $\rho$ not a renaming. Take $\dom(\rho)=\var(x_0\theta)$. Since $\rho$ is not a renaming, there exists $y$ in $\dom(\rho)$ that verifies either \textbf{(1)} $\exists y'\in \V\setminus \{y\},  y\rho = y'\rho \in \V $, i.e. $\rho$ merges at least two variables occurring in $x_0\theta$ or \textbf{(2)} $y\rho\notin\V$.

Case \textbf{(1)}. By Lem.\ref{lem:invariant_in_configurations}, the active set being empty, $\idx(S)=\var(x_0\theta)=\dom(\rho)$, so $y$ and $y'$ index solved AUTs in $S$: $y:s\triangleq t\in S$ and $y':s'\triangleq t'\in S$. 
By Lem.\ref{lem:rules_reveal_common_structure} there are solvable positions $p,p'$ with $s_0|_{p}=s,t_0|_{p}=t$ and $s_0|_{p'}=s',t_0|_{p'}=t'$. 
Now $r|_{p}=(x_0\theta\rho)|_{p}=(x_0\theta)|_{p}\rho=y\rho$ and similarly $r|_{p'}=y'\rho$.
Given that we know that $y\rho = y'\rho$, this implies $r|_{p}=r|_{p'}$. 
Since $r\in CPG(s_0,t_0)$ there are witnesses $\sigma_{L},\sigma_{R}$ with $r\sigma_{L}=s_0$ and $r\sigma_{R}=t_0$, hence $s_0|_{p}=r|_{p}\sigma_{L}=r|_{p'}\sigma_{L}=s_0|_{p'}$ and similarly $t_0|_{p}=r|_{p}\sigma_{R}=r|_{'}\sigma_{R}=t_0|_{p'}$. 
We deduce $s=s'$ and $t=t'$. 
Thus $S$ contains two solved AUTs with identical subproblems, which \textsf{Recover} forbids. Contradiction.

Case \textbf{(2)}. Again $\idx(S)=\var(x_0\theta)$ so $y\in\idx(S)$ and $y:s\triangleq t\in S$. By Lem.\ref{lem:rules_reveal_common_structure} there is a solvable position $p$ with $s_0|_p=s,t_0|_p=t$. Now $r|_p=(x_0\theta\rho)|_p=(x_0\theta)|_p\rho=y\rho$, which by hypothesis is a non-variable term. 
But $r\in CPG(s_0,t_0)$, so its witnessing substitutions $\sigma_{L},\sigma_{R}$ satisfy $r\sigma_{L}=s_0$ and $r\sigma_{R}=t_0$. 
Applying $\sigma_{L}$ and $\sigma_{R}$ to $r|_p=y\rho$ yields $s$ and $t$, respectively; hence $\head(s)=\head(t)=\head(y\rho)$. 
Thus the solved AUT $y:s\triangleq t$ has identical root symbols which \textsf{Solve} forbids. Contradiction. 
\qed 
\end{proof}

\begin{theorem}[Completeness]
\label{th:algo_noeq_completeness}
For any $r,s_0,t_0 \in \termground$, if $r$ is a sc-preserving lgg of $s_0$ and $t_0$ then there is a derivation from $\conf_0$ to a final configuration $\langle \emptyset | S | \theta | x_0 \rangle$ such that $r \simeq x_0\theta$.
\end{theorem}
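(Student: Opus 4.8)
The plan is to start from an arbitrary sc-preserving lgg $r$ of $s_0$ and $t_0$, exhibit a concrete terminating derivation from $\conf_0$, and then show the solved configuration it reaches yields a term renaming-equivalent to $r$. First I would invoke Theorem~\ref{thm:syntactic-termination} to know that \emph{any} maximal derivation from $\conf_0$ is finite, and let $\conf = \langle \activeAUTs | S | \theta | x_0 \rangle$ be the final configuration it reaches. By Lemma~\ref{lem:final_configuration_and_conflicts}, every active AUT in $\activeAUTs$ corresponds to a failure conflict position of $s_0$ and $t_0$. The first key step is to argue that $\activeAUTs = \emptyset$, i.e.\ that $\conf$ is \emph{solved}: since $r$ is an sc-preserving lgg we have $CPG(s_0,t_0) \neq \emptyset$, so by Lemma~\ref{lem:failure_conflict_pos_charac} we get $fcpos(s_0,t_0) = \emptyset$, and hence $\textsf{Pairs}(\activeAUTs) \subseteq \emptyset$ forces $\activeAUTs = \emptyset$. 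Thus the derivation I chose is automatically solved final.

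The second key step is to apply the already-proved results to this solved $\conf$. By Theorem~\ref{th:noeq_soundness}, $x_0\theta$ is itself an sc-preserving lgg of $s_0$ and $t_0$. Now I have two sc-preserving lggs, namely $r$ and $x_0\theta$, of the same pair $(s_0,t_0)$. The conclusion $r \simeq x_0\theta$ then follows immediately from Lemma~\ref{lem:unicity_sc_preserving_lgg}, which states that the sc-preserving lgg, when it exists, is unique up to renaming equivalence. This closes the argument, and the final configuration reached is the witness required by the statement.

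\textbf{Anticipated obstacle.} The main subtlety is not in the chaining of lemmas but in justifying that a derivation even reaches a solved configuration for the given $r$: the statement asks for the \emph{existence} of a derivation, and I obtain it by running the rules to \emph{any} final configuration (guaranteed finite by termination) and then proving that finality plus $CPG(s_0,t_0)\neq\emptyset$ forces the active set to be empty. The only point requiring care is the direction of the equivalence $r \simeq x_0\theta$ versus the $\prec$ relation: $\simeq$ is precisely renaming-equivalence here (by the existence-and-uniqueness result of~\cite{Unification_revisited} lifted through Lemma~\ref{lem:unicity_sc_preserving_lgg}), so no extra work beyond citing uniqueness is needed. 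I would also note explicitly that confluence of the rule system is \emph{not} required: because the sc-preserving lgg is unique up to renaming, every solved final configuration produces the same generalization up to $\simeq$, so it suffices to produce one such derivation rather than reason about all of them.
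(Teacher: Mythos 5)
Your proposal is correct and follows essentially the same route as the paper's own proof: both chain Lemma~\ref{lem:failure_conflict_pos_charac} (to get $fcpos(s_0,t_0)=\emptyset$ from $CPG(s_0,t_0)\neq\emptyset$), Lemma~\ref{lem:final_configuration_and_conflicts} (to force the active set of any final configuration to be empty), Theorem~\ref{thm:syntactic-termination} (to guarantee such a configuration is reached), Theorem~\ref{th:noeq_soundness} (soundness of the result), and Lemma~\ref{lem:unicity_sc_preserving_lgg} (uniqueness up to renaming) in the same way. Your closing remark that confluence is not needed, since uniqueness of the sc-preserving lgg makes any solved final configuration acceptable, is a nice explicit articulation of a point the paper leaves implicit.
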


\begin{proof}
Let us suppose $r$ is an scplgg of $s_0$ and $t_0$. Then $CPG(s_0,t_0) \neq \emptyset$ so by Lem.\ref{lem:failure_conflict_pos_charac}, we must have $fcpos(s_0,t_0) = \emptyset$.
Then by Lem.\ref{lem:final_configuration_and_conflicts}, for any final configuration $c = \langle \activeAUTs| S|\theta|x_0\rangle$ we must have $\activeAUTs = \emptyset$. 
Because the algorithm is terminating (Th.\ref{thm:syntactic-termination}) one such configuration exists.
Then by Th.\ref{th:noeq_soundness}, we must have that $x\theta$ is a sclgg of $s_0$ and $t_0$.
Finally, by Lem.\ref{lem:unicity_sc_preserving_lgg} we have $r \simeq x_0\theta$.
\qed
\end{proof}

Let us remark that, for any $\conf = \langle A | S | \theta | x_0 \rangle$, if $\exists x : s \triangleq t \in A$ s.t.~$SC(s) \neq SC(t)$, then, in any $\conf'$ reachable from $\conf$, $A'$ cannot be empty.
Indeed, if $\head(s) \neq \head(t)$, only $\textsf{Solve}$ may remove $x : s \triangleq t$ from $A$ and it cannot be applied.
If $\head(s) = \head(t)$ with $s = f(\overline{s})$ and $t = f(\overline{t})$ then $\exists i \in [1,|\overline{s}|]$ s.t.~$SC(s_i) \neq SC(t_i)$. Then, applying $\textsf{Decompose}$ results in adding a $y : s_i \triangleq t_i$ with $SC(s_i) \neq SC(t_i)$ to $A'$. By induction, from $\conf'$, no successor $\conf''$ can have $A'' = \emptyset$.
As a result, one cannot find a scpg by exploring successors of $\conf$.
Thus, we can pre-emptively prune the exploration of configurations by applying the higher priority $\textsf{Fail}$ rule from Fig.\ref{fig:fail}. Its use will speed up the detection of failure cases (see Sec.\ref{sec:application_to_int_compo}).

\begin{figure}[t]
\begin{mdframed}
\centering
\input{algos/fail_rule}
\end{mdframed}
\caption{ $\textsf{(Fail)}$ a derived rule to complete rules of Fig.\ref{fig:algo_sc-gneraliz_noE}}
\label{fig:fail}
\end{figure}

\section{Equational constant-preserving generalization}
\label{sec:extension_modulo_E}

Sc-preserving anti-unification modulo an equational theory $E$ requires $E$ to be sc-preserving as per Def.\ref{def:eqn-sc-preserving-generalization}.
Otherwise, applying the equations could erase the special-constants (e.g., $f(a_f,x) \approx a_f$ with an absorbing element $a_f$ of $f$).

\begin{definition}
\label{def:eqn-sc-preserving-generalization}
Given an equational theory $E$:
\begin{itemize}
    \item $E$ is \emph{sc-preserving} if $\forall s,t\in\termalg,~ s =_E t \Rightarrow SC(s) = SC(t)$
    \item the \emph{sc-preserving $E$-generalizations} ($E$-\specialConstantPreservingGeneralisation) of terms $s,t \in \termalg$ are:
\[ CPG_E(s,t) = \left\{ r \in \termalg \middle| \begin{array}{l} \exists \sigma_s,~( r \sigma_s =_E s) \wedge (SC(\ran(\sigma_s)) = \emptyset)\\ \exists \sigma_t,~( r \sigma_t =_E t) \wedge (SC(\ran(\sigma_t)) = \emptyset)\\ \end{array} \right\} \]
\end{itemize}
\end{definition}

Similarly as we did for $CPG$ in Lem.\ref{lemma:non-existence-cg}, we characterize the preservation of special constants by $CPG_E$ in Lem.\ref{lem:cpg_e_preserves_sc}.

\begin{lemma}
\label{lem:cpg_e_preserves_sc}
For any sc-preserving theory $E$ and $\forall s,t \in \termalg$ we have:
\[
CPG_E(s,t) \neq \emptyset 
~\Rightarrow~ 
\left(
( SC(s) = SC(t) )
~\wedge~
( \forall r \in CPG_E(s,t),~ SC(r) = SC(s) )
\right)
\]
\end{lemma}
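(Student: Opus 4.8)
The plan is to mirror the proof of Lem.\ref{lemma:non-existence-cg}, replacing the syntactic equality $=$ with $=_E$ and exploiting the sc-preservation hypothesis on $E$ at the single place where terms related by the theory must be compared. First I would fix, using the non-emptiness assumption, an arbitrary $r \in CPG_E(s,t)$ together with its witnessing substitutions $\sigma_s$ and $\sigma_t$, so that $r\sigma_s =_E s$, $r\sigma_t =_E t$, and $SC(\ran(\sigma_s)) = SC(\ran(\sigma_t)) = \emptyset$.

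The key auxiliary observation is that applying a substitution whose range contains no special constants does not alter the set of special constants of a term. Since substitution only rewrites variables (which carry no special constants) into special-constant-free terms, and since the special constants occurring in $r$ sit at constant positions that survive substitution, one obtains $SC(r\sigma_s) = SC(r) \cup SC(\ran(\sigma_s)) = SC(r)$, and symmetrically $SC(r\sigma_t) = SC(r)$. This is precisely the computation already carried out in the second part of Lem.\ref{lemma:non-existence-cg}.

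Next I would invoke the sc-preservation of $E$. From $r\sigma_s =_E s$ and Def.\ref{def:eqn-sc-preserving-generalization} we get $SC(r\sigma_s) = SC(s)$, hence $SC(r) = SC(s)$; likewise $r\sigma_t =_E t$ yields $SC(r) = SC(t)$. Combining these gives $SC(s) = SC(r) = SC(t)$, which establishes both conjuncts of the conclusion simultaneously: the equality $SC(s) = SC(t)$, and, since $r$ was an arbitrary element of $CPG_E(s,t)$, the identity $SC(r) = SC(s)$ for every $r \in CPG_E(s,t)$.

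The only genuinely new ingredient compared with the equation-free case is the appeal to sc-preservation to pass from the $E$-equality $r\sigma_s =_E s$ to the equality of special-constant sets; everything else is the same bookkeeping on $SC$ under substitution. Accordingly, the main (and essentially only) obstacle is to apply the hypotheses to the right objects: sc-preservation of $E$ must be used on the two $E$-equalities, whereas the control of the substitutions is provided separately by the conditions $SC(\ran(\sigma_s)) = SC(\ran(\sigma_t)) = \emptyset$. Keeping these two roles distinct is what makes the argument go through cleanly, and it is why the contradiction-based opening of Lem.\ref{lemma:non-existence-cg} can here be replaced by the more direct derivation of $SC(r) = SC(s)$ and $SC(r) = SC(t)$.
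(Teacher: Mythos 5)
Your proof is correct and takes essentially the same approach as the paper's: fix $r \in CPG_E(s,t)$ with its witnesses, use $SC(\ran(\sigma_s)) = SC(\ran(\sigma_t)) = \emptyset$ to get $SC(r\sigma_s) = SC(r\sigma_t) = SC(r)$, and apply sc-preservation of $E$ to the two $E$-equalities to conclude $SC(s) = SC(r) = SC(t)$. The only difference is cosmetic: you derive the second conjunct explicitly, whereas the paper dismisses it as obvious.
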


\begin{proof}   
Assume that $r \in CPG_E(s,t)$ exists with witnessing substitutions $\sigma_s$ and $\sigma_t$.
Since $E$ is sc-preserving, we have $SC(r\sigma_s) = SC(s)$ and  $SC(r\sigma_t) = SC(t)$. But $SC(r\sigma_s) = SC(r\sigma_t) = SC(r)$ because $\sigma_s$ and $\sigma_t$ introduce no special constants. Hence, $SC(s) = SC(t)$. The second part is obvious.
\qed
\end{proof}

Unlike the \specialConstantPreservingLgg\, of Def.\ref{def:sc-preserving-lgg}, the $E$-\specialConstantPreservingLgg\, of Def.\ref{def:sc-preserving-E-lgg} are not necessarily unique up to renaming.
Consider for instance
$s = f(\mathfrak{a}, g(u, u))$ and $t = f(g(v, v),\mathfrak{a})$ with $E = \{f(x,y) \approx f(y,x)\}$.
Here both $r_1 = f(\mathfrak{a}, g(x,x))$ and $r_2 = f(g(x,x),\mathfrak{a})$ are $E$-\specialConstantPreservingLgg s of $s$ and $t$.
Yet, we may remark that we have $r_1 =_E r_2$.

\begin{definition}
\label{def:sc-preserving-E-lgg}
For any $s, t \in \termalg$, we have that $r \in CPG_E(s,t)$ is an $E$-\specialConstantPreservingMgg\, (resp.~$E$-\specialConstantPreservingLgg) of $s$ and $t$ if $\not\exists~r' \in CPG_E(s,t)$ s.t.~$r' \prec_E r$ (resp.~$r \prec_E r'$) i.e., $r$ is minimal (resp.~maximal) in $CPG_E(s,t)$ w.r.t.~$\preceq_E$.
\end{definition}

With Prop.\ref{prop:sc-preserving-E-lgg-lgg}, we adapt
Prop.\ref{prop:sc-preserving-lgg-lgg} to the case modulo equations.

\begin{property}
\label{prop:sc-preserving-E-lgg-lgg}
A sc-preserving $E$-lgg of $s$ and $t$ in $\termalg$ is an $E$-lgg.
\end{property}
\begin{proof}
Let $r$ be an sc-preserving $E$-lgg of $s$ and $t$ with witnessing substitutions $\sigma_s$ and $\sigma_t$. Suppose, for contradiction, that a $E$-generalization $r'$ of $s$ and $t$ with witnessing substitutions $\sigma_s',\sigma_t'$ exists such that $r \prec_E r'$, i.e., $r\sigma =_E r'$ for some substitution $\sigma$. Then
$r\sigma\sigma_s' =_E r\sigma_s =_E s$ and $r\sigma\sigma_t' =_E r\sigma_t =_E t$

Restricting domains of $\sigma, \sigma_t$ and $\sigma_s$ (resp. $\sigma_t'$ and $\sigma_s'$) to $\var(r)$ (resp. $\var(r')$), we deduce $\sigma\sigma_s' = \sigma_s$ and $\sigma\sigma_t' = \sigma_t$. Hence, $SC(\ran(\sigma_t')) \subseteq SC(\ran(\sigma_t))$ and $SC(\ran(\sigma'_t))\subseteq SC(\ran(\sigma_t))$. Since $r$ is a sc-preserving $E$-generalization, $SC(\ran(\sigma_s)) = SC(\ran(\sigma_t)) = \emptyset$, hence $SC(\ran(\sigma_s')) = \emptyset$ and $SC(\ran(\sigma_t')) = \emptyset$.
Thus, $r'$ is also sc-preserving, contradicting the maximality of $r$ as an sc-preserving $E$-generalization. Therefore, $r$ is an $E$-lgg of $s$ and $t$. \hfill $\square$
\end{proof}

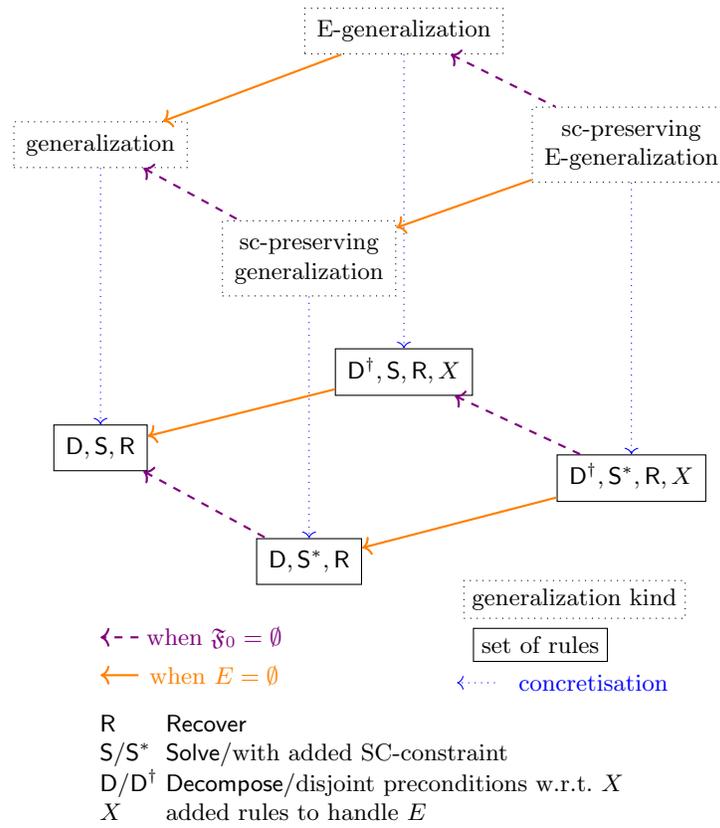
\begin{figure}
    \centering
    \resizebox{0.8\linewidth}{!}{%
        \begin{tikzpicture}

\node[draw,dotted] (th_general) at (-7,6.5) {
$
\begin{array}{c}
\text{generalization}
\end{array}
$
};

\node[draw,dotted] (th_e_general) at (-3,8) {
$
\begin{array}{c}
\text{E-generalization}
\end{array}
$
};

\node[draw,dotted] (th_scp_e_general) at (0,6.5) {
$
\begin{array}{c}
\text{sc-preserving}
\\
\text{E-generalization}
\end{array}
$
};

\node[draw,dotted] (th_scp_general) at (-4.25,5) {
$
\begin{array}{c}
\text{sc-preserving}
\\
\text{generalization}
\end{array}
$
};

\node[draw] (rules_general) at (-7,2.5) {
$
\begin{array}{c}
\textsf{D}, \textsf{S}, \textsf{R}
\end{array}
$
};

\node[draw]  (rules_scp_general) at (-4.25,1) {
$
\begin{array}{c}
\textsf{D}, \textsf{S}^*, \textsf{R}
\end{array}
$
};

\node[draw]  (rules_e_general) at (-3,3.5) {
$
\begin{array}{c}
\textsf{D}^\dagger, \textsf{S}, \textsf{R}, X
\end{array}
$
};

\node[draw]  (rules_scp_e_general) at (0,2.1) {
$
\begin{array}{c}
\textsf{D}^\dagger, \textsf{S}^*, \textsf{R}, X
\end{array}
$
};

\draw[<-,violet,thick,dashed]  (th_general) edge (th_scp_general);
\draw[<-,violet,thick,dashed]  (rules_general) edge (rules_scp_general);
\draw[<-,violet,thick,dashed]  (th_e_general) edge (th_scp_e_general);
\draw[<-,violet,thick,dashed]  (rules_e_general) edge (rules_scp_e_general);
\draw[->,orange,thick]  (th_e_general) edge (th_general);
\draw[->,orange,thick]  (th_scp_e_general) edge (th_scp_general);
\draw[<-,orange,thick]  (rules_scp_general) edge (rules_scp_e_general);
\draw[<-,orange,thick]  (rules_general) edge (rules_e_general);
\draw[blue,dotted,->]  (th_scp_general) edge (rules_scp_general);
\draw[blue,dotted,->]  (th_general) edge (rules_general);
\draw[blue,dotted,->]  (th_e_general) edge (rules_e_general);
\draw[blue,dotted,->]  (th_scp_e_general) edge (rules_scp_e_general);

\node[violet] at (-5.5,0) {when $\setOfAllSpecialConstants=\emptyset$};
\draw[dashed,->,thick,violet] (-6.5,0) -- (-7,0);

\node[orange] at (-5.5,-.5) {when $E=\emptyset$};
\draw[orange,->,thick] (-6.5,-.5) -- (-7,-.5);

\node[align=left] at (-3.5,-1.75) {
$
\begin{array}{ll}
\textsf{R} & \textsf{Recover}\\
\textsf{S}/\textsf{S}^* & \textsf{Solve}/\text{with added SC-constraint}\\
\textsf{D}/\textsf{D}^\dagger & \textsf{Decompose}/\text{disjoint preconditions w.r.t.}~X\\
X & \text{added rules to handle }E\\
\end{array}
$
};

\node[draw,dotted] at (-.75,.5) {generalization kind};
\node[draw] at (-1.2,-.1) {set of rules};

\node[blue] at (-.5,-.6) {concretisation};
\draw[blue,->,dotted] (-1.8,-.6) -- (-2.3,-.6);
\end{tikzpicture}
    }   
    \caption{Relating generalizations}
\label{fig:diagram_generalisation_kinds}
\end{figure}

Fig.\ref{fig:diagram_generalisation_kinds} illustrates the relationships between standard generalization~\cite{ALPUENTE2009,ALPUENTE2014,ALPUENTE2022} and our sc-preserving ($E$-)generalization. 

The dashed horizontal arrows indicate that our formulation naturally subsumes standard generalization when the set of special constants is empty. 
Similarly, the plain horizontal arrows show that in the absence of equational axioms, the problem reduces to syntactic generalization.
The top and bottom layers resp.~correspond to the problem definitions and the inference rules used to solve them.

We update the standard rules $\{\textsf{D}, \textsf{S}, \textsf{R}\}$, substituting \textsf{S} with a $\textsf{S}^*$ that enforces constant preservation\footnote{
Our \textsf{Solve} rule from Fig.\ref{fig:algo_sc-gneraliz_noE} is denoted by $\textsf{S}^*$ on Fig.\ref{fig:diagram_generalisation_kinds} as it differs from the standard \textsf{Solve} rule from \cite{ALPUENTE2014,CernaUnital} by the added condition $SC(s)=SC(t)=\emptyset$.

}.
$E$-generalization is implemented in \cite{ALPUENTE2014,CernaUnital} by introducing additional ad-hoc rules to handle specific theories (e.g.~an operator $f$ being associative, commutative, or having a neutral element, etc.) and combinations of theories (e.g.~$f$ being both associative and having a neutral element, etc.). On Fig.\ref{fig:diagram_generalisation_kinds}, we represent this via the set $X$ of additional rules, and the use of $\textsf{D}^\dagger$ instead of $\textsf{D}$ as we do not apply the standard \textsf{Decompose} rule if a rule in $X$ can be applied.
To achieve sc-preserving $E$-generalization, we follow the same principle, the only difference being the use of the special $\textsf{S}^*$ rule.

 \begin{figure}[h]
 \vspace*{.25cm}
\begin{mdframed}
\centering
\input{algos/rule_decompose_C}
\end{mdframed}
\caption{Generalization rule for a commutative, non-associative function symbol~\cite{ALPUENTE2014}}
\label{fig:algo_rule_decompose_C}
 \end{figure}

On Fig.\ref{fig:algo_rule_decompose_C}, we propose one such ad-hoc rule. If we have $X=\{\textsf{Decompose}_C\}$ as the $X$ of Fig.\ref{fig:diagram_generalisation_kinds} then the rule \textsf{Decompose} of Fig.\ref{fig:algo_sc-gneraliz_noE} would not be applied if the shared head symbol $f$ is commutative but $\textsf{Decompose}_C$ would be applied instead.

\begin{example} Reusing Ex.\ref{ex:sc-preserving-noE_fail}, 
we now assume that 
$g$ is commutative and non-associative.
By applying the rules of Fig.\ref{fig:algo_sc-gneraliz_noE} together with rule of Fig.\ref{fig:algo_rule_decompose_C}, an sc-preserving $E$-generalization exists: we have $x_0\theta = f(\mathfrak{a},g(\mathfrak{b},x_4))$ with $\{ x_4 \mapsto u\}$ and $\{ x_4 \mapsto v\}$ as left and right instantiations.
$$
\resizebox{.8\textwidth}{!}{%
    \begin{prooftree}
        \hypo{\langle \{ x_0: f(\mathfrak{a},g(\mathfrak{b},u)) \triangleq f(\mathfrak{a},g(v,\mathfrak{b}))\} \mid \emptyset \mid Id_\V \mid x_0\rangle}
        \infer[left label=(\textsf{D})]1{\langle\{ x_1: \mathfrak{a}\triangleq \mathfrak{a}, x_2: g(\mathfrak{b},u) \triangleq g(v,\mathfrak{b}) \}\mid \emptyset \mid \{ x_0 \mapsto f(x_1,x_2)\} \mid x_0\rangle }
        \infer[left label=(\textsf{D})]1{\langle\{x_2: g(\mathfrak{b},u) \triangleq g(v,\mathfrak{b}) \}\mid \emptyset \mid \{ x_0 \mapsto f(x_1,x_2)\}\{ x_1 \mapsto \mathfrak{a}\} \mid x_0\rangle }
        \infer[left label=(\textsf{D}$_{\text{C}}$)]1{\langle\{x_3: \mathfrak{b} \triangleq \mathfrak{b}, x_4: u \triangleq v \}\mid \emptyset \mid \{ x_0 \mapsto f(\mathfrak{a},x_2)\}\{ x_2 \mapsto g(x_3,x_4)\} \mid x_0\rangle }
         \infer[left label=(\textsf{D})]1{\langle\{x_4: u \triangleq v \}\mid \emptyset \mid \{ x_0 \mapsto f(\mathfrak{a},x_2)\}\{ x_2 \mapsto g(x_3,x_4)\}\{ x_3 \mapsto \mathfrak{b}\} \mid x_0\rangle }
          \infer[left label=(\textsf{S})]1{\langle \emptyset \mid  \{x_4: u \triangleq v \} \mid \{ x_0 \mapsto f(\mathfrak{a},x_2)\}\{ x_2 \mapsto g(x_3,x_4)\}\{ x_3 \mapsto \mathfrak{b}\} \mid x_0\rangle }
    \end{prooftree}
    }
$$
\end{example}

\section{Application to the composition of interactions}
\label{sec:application_to_int_compo}

Recall the interaction composition problem discussed in Fig.~\ref{fig:example_interaction_composition}.
Let $\mathcal{L}$ and $\M$ be the universes of lifelines and messages.
Atomic actions are  
$\act=\{\,l!m,\; l?m \mid l\in\mathcal{L},\; m\in\M\,\}$,  
where for any $a\in\act$, $\Delta(a)\in\{!,?\}$ and $\messageOfAction(a)\in\M$ denote its type and message.
Value-passings form the set $\vps=\{\vp(l_1,m,l_2)\mid  l_1,l_2\in\mathcal{L}, m\in\M,\; l_1\neq l_2\}$.  
Two actions $u,v\in\act$ are \emph{compatible} iff there exist $l_1,l_2\in\mathcal{L}$ and $m\in\M$ with $l_1\neq l_2$ such that $\{u,v\}=\{l_1!m,l_2?m\}$; for such actions, we may abusively write $\vp(u,v)$ (resp. $\vp(v,u)$) for $\textsf{vp}(l_1,m,l_2)$.
The set of interactions is $\inter=\mathcal{T}(\mathcal{F})$, with signature 
$\F=(\F_n)_{n\in[0,2]}$ defined as:

\[
\F_0 = \act \cup \vps \cup \{\varnothing\}, \quad
\F_1 = \{\loopS\}, \quad
\F_2 = \{\seq, \alt, \para\}.
\]

\noindent For any $i \in \mathbb{I}$, $\lfs(i) \subset \mathcal{L}$ denotes the lifelines occurring in $i$, defined by $\lfs(l!m)=\lfs(l?m)=\{l\}$, $\lfs(\vp(l_1,m,l_2))=\{l_1,l_2\}$, $\lfs(\loopS(i))=\lfs(i)$, and $\lfs(f(i,j))=\lfs(i)\cup\lfs(j)$ for $f\in\{\seq,\alt,\para\}$. Semantically, $\seq$, $\alt$, and $\para$ are associative, $\alt$ and $\para$ commutative, and $\varnothing$ is the unit for $\seq$ and $\para$. Let $E$ denote the set of equations characterizing these properties.

Given a set $\mathfrak{F}_0$ of special constants called gates, a \emph{tagging} $\gamma$ of $i,j \in \mathbb{I}$ is a set such that: \textbf{(1)} each $g \in \gamma$ is a tuple $(p_i,p_j,\mathfrak{a})$ with $\mathfrak{a} \in \mathfrak{F}_0$, $p_i \in pos(i)$, $p_j \in pos(j)$, and $i|_{p_i}$, $j|_{p_j}$ are compatible actions; \textbf{(2)} for $g,g' \in \gamma$ with $g'=(p_i',p_j',\mathfrak{b})$, we have $\mathfrak{a}=\mathfrak{b}$ implies $(i|_{p_i} = i|_{p_i'})\wedge (j|_{p_j} = j|_{p_j'})$. Condition \textbf{(1)} ensures each tag relates compatible actions via a gate, while \textbf{(2)} enforces uniqueness of actions and gates. We denote by $\Gamma(i,j)$ the set of all taggings of $i$ and $j$.

Given interactions $i,j,k \in \mathbb{I}$, $(i,j) \overset{\gamma}{\propto} k$ means that $k$ is obtained by composing $i$ and $j$ via a tagging $\gamma \in \Gamma(i,j)$ as defined in Def.\ref{def:int_compo_via_gate_anti_unif}. We use \emph{mappings} to replace gate with actions or value-passings.  
A mapping is a partial function 
$\lambda : \mathfrak{F}_0 \rightharpoonup (\mathbb{A} \cup \mathbb{VP})$,  
written $\lambda = \{\mathfrak{a}_1 \mapsto u_1, \dots, \mathfrak{a}_n \mapsto u_n\}$, so that $\dom(\lambda) = \{\mathfrak{a}_1, \dots, \mathfrak{a}_n\} \subseteq \mathfrak{F}_0$.  
For $t \in \mathcal{T}(\mathcal{F} \cup \mathfrak{F}_0)$, $t\lambda$ is defined by $t\lambda$ if $t \in \dom(\lambda)$, $t\lambda = t$ if $t \in \mathfrak{F}_0 \setminus \dom(\lambda)$, and $t\lambda = f(t_1\lambda,\dots,t_m\lambda)$ if $t = f(t_1,\dots,t_m)$ with $f \in \mathcal{F}_m$.

\begin{definition}
\label{def:int_compo_via_gate_anti_unif}
For $i,j,k \in \mathbb{I}$ with $\lfs(i)\cap\lfs(j)=\emptyset$ and $\gamma\in\Gamma(i,j)$, we write $(i,j)\overset{\gamma}{\propto}k$ iff:
\begin{itemize}
    \item $\lambda_i = \{\mathfrak{a}\mapsto i|_{p_i} \mid (p_i,p_j,\mathfrak{a})\in\gamma\}$ and
          $\lambda_j = \{\mathfrak{a}\mapsto j|_{p_j} \mid (p_i,p_j,\mathfrak{a})\in\gamma\}$;
    \item $i = s\lambda_i$ and $j = t\lambda_j$ for some $s,t\in\mathcal{T}(\mathcal{F}\cup\mathfrak{F}_0)$;
    \item $r\in\mathcal{T}(\mathcal{F}\cup\mathfrak{F}_0,\V)$ is an $E$-scpg of $s$ and $t$ with witnesses $\sigma_s,\sigma_t$;
    \item $\sigma_r$ satisfies $\dom(\sigma_r)=\dom(\sigma_s)$ and 
          $x\sigma_r = \textsf{seq}(x\sigma_s, x\sigma_t)$ for all $x$;
    \item $\lambda_k$ satisfies $\dom(\lambda_k)=\dom(\lambda_i)$ and 
          $\mathfrak{a}\lambda_k = \textsf{vp}(\mathfrak{a}\lambda_i,\mathfrak{a}\lambda_j)$ for all $\mathfrak{a}$;
    \item $k = r\sigma_r\lambda_k$.
\end{itemize}
\end{definition}

To prove the soundness of our composition, we adapt the projection-based approach of Lange~\cite{LangeSynthesis}, where a global type is inferred from local session types. Def.\ref{def-projection} introduces a projection operator $\pi$. In~\cite{MaheBGG25}, we showed that $\pi$ preserves interaction semantics, i.e., $\pi_L(i)$ gives the traces of $i$ restricted to actions on $L$. 

\begin{definition}
\label{def-projection}
The projection $\pi_L$ onto a subset of lifelines $L \subset \mathcal{L}$ is defined by:
\[
\begin{array}{c @{\hspace{.5cm}} c}
\pi_L(l\Delta m) \!=\!
\left\{ 
\begin{array}{ll}
l \Delta m & \textbf{if } l \in L\\
\varnothing & \textbf{if } l \not\in L
\end{array}
\right. 
&
\pi_L(\vp(l_1,m,l_2)) \!=\! 
\left\{ 
\begin{array}{ll}
l_1!m & \textbf{if } l_1 \in L, l_2 \not\in L\\
l_2?m & \textbf{if } l_1 \not\in L, l_2 \in L\\
\vp(l_1,m,l_2) &\textbf{if } l_1 \in L, l_2 \in L\\
\varnothing &\textbf{if } l_1 \not\in L, l_2 \not\in L
\end{array}
\right.
\\ [.75cm]
\pi_L(\textsf{loop}(i)) = \textsf{loop}(\pi_L(i))
&
\forall f \in \{ \seq, \alt, \para\} ~ \pi_L( f(i,j)) = f(\pi_L(i), \pi_L(j))
\end{array}
\]
\end{definition}

With Th.\ref{th:int_compo_via_gate_anti_unif_is_sound_compo}, we show that whenever $(i,j) \overset{\gamma}{\propto} k$, projecting $k$ on the lifelines of $i$ (resp.~$j$) yields an interaction semantically equivalent to $i$ (resp.~$j$). For this, we extend $\theta$ and $\pi$ to $\mathcal{T}(\mathcal{F} \cup \mathfrak{F}_0,\mathcal{V})$ via $\theta(\mathfrak{a}) = \theta(x) = \emptyset$, $\pi_L(\mathfrak{a}) = \mathfrak{a}$, and $\pi_L(x) = x$.

\begin{theorem}
\label{th:int_compo_via_gate_anti_unif_is_sound_compo}
For any $i,j,k \in \inter$ with $\lfs(i) \cap \lfs(j) = \emptyset$ and any $\gamma \in \Gamma(i,j)$, if $(i,j) \overset{\gamma}{\propto} k$ then $k$ is a sound $E$-$\gamma$-composition of $i$ and $j$, meaning:
\[
\pi_{\lfs(i)}(k) =_E i \qquad\text{and}\qquad \pi_{\lfs(j)}(k) =_E j.
\]
\end{theorem}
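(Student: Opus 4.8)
The plan is to unfold $(i,j)\overset{\gamma}{\propto}k$ and to push the projection $\pi_{\lfs(i)}$ through the operators of $k$ until it lands on a term that is manifestly $=_E i$. From Def.~\ref{def:int_compo_via_gate_anti_unif} we have $k=r\sigma_r\lambda_k$ and $i=s\lambda_i$, and since $r$ is an $E$-scpg of $s$ and $t$ we have $s=_E r\sigma_s$. Because no equation of $E$ mentions a gate, $=_E$ is stable under the gate-mapping $\lambda_i$, whence $i=s\lambda_i=_E r\sigma_s\lambda_i$. It therefore suffices to establish the commutation identity $\pi_{\lfs(i)}(r\sigma_r\lambda_k)=_E r\sigma_s\lambda_i$ (and its $j$-analogue), after which the theorem follows by transitivity of $=_E$.

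I would prove this identity by structural induction on $r\in\T(\F\cup\setOfAllSpecialConstants,\V)$, exploiting that substitution, gate-mapping, and $\pi_{\lfs(i)}$ are all homomorphic on $\seq,\alt,\para,\loopS$; the inductive step is then a direct application of the congruence of $=_E$. Two base cases remain. For a gate $\mathfrak{a}\in SC(r)$ we have $\mathfrak{a}\sigma_r\lambda_k=\vp(\mathfrak{a}\lambda_i,\mathfrak{a}\lambda_j)$, where $\mathfrak{a}\lambda_i$ is an action over $\lfs(i)$ and $\mathfrak{a}\lambda_j$ an action over $\lfs(j)$; since $\lfs(i)\cap\lfs(j)=\emptyset$, a short case split on which of the two is the emission shows that the $\vp$-clause of $\pi_{\lfs(i)}$ (Def.~\ref{def-projection}) returns exactly $\mathfrak{a}\lambda_i$, matching $\mathfrak{a}\sigma_s\lambda_i=\mathfrak{a}\lambda_i$ on the right. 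For a variable $x\in\var(r)$, both $x\sigma_s$ and $x\sigma_t$ are gate-free (sc-preservation), so $x\sigma_r\lambda_k=\seq(x\sigma_s,x\sigma_t)$.

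The non-$\varnothing$ atoms of a term are invariant under the equations of $E$, which only reorder terms or delete $\varnothing$; hence the atoms of $x\sigma_s$ are exactly those it contributes to $s$, i.e.\ actions and value-passings over $\lfs(i)$, and symmetrically the atoms of $x\sigma_t$ lie over $\lfs(j)$. Consequently $\pi_{\lfs(i)}(x\sigma_s)=x\sigma_s$, whereas $\pi_{\lfs(i)}(x\sigma_t)$ replaces every leaf by $\varnothing$ and collapses to $\varnothing$ modulo $E$, so that $\pi_{\lfs(i)}(x\sigma_r\lambda_k)=\seq(x\sigma_s,\pi_{\lfs(i)}(x\sigma_t))=_E\seq(x\sigma_s,\varnothing)=_E x\sigma_s=x\sigma_s\lambda_i$, again matching the right-hand side. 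With both base cases in hand the induction yields $\pi_{\lfs(i)}(k)=_E r\sigma_s\lambda_i=_E i$, and the symmetric argument—swapping $\sigma_s/\sigma_t$ and $\lambda_i/\lambda_j$, and recovering the $\lfs(j)$-side action from each $\vp$—gives $\pi_{\lfs(j)}(k)=_E j$.

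The main obstacle is the foreign-vanishing step $\pi_{\lfs(i)}(x\sigma_t)=_E\varnothing$: projection sends an interaction whose atoms all lie over the foreign lifelines $\lfs(j)$ to an operator tree with only $\varnothing$-leaves, and collapsing this to $\varnothing$ is immediate for $\seq$ and $\para$ from the unit laws, but for $\alt$ and $\loopS$ it requires the corresponding absorption of $\varnothing$. I would isolate this as a preliminary lemma—any interaction all of whose atoms are erased by $\pi_{\lfs(i)}$ is $=_E\varnothing$—justified by the neutral/absorption behaviour of $\varnothing$ in the interaction theory and consistent with the trace-semantics reading of $\pi$ from~\cite{MaheBGG25}, under which such an interaction has only the empty trace. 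The remaining ingredients (the homomorphic push-through, the exact gate recovery, and the $=_E$-stability of $\lambda_i$) are routine.
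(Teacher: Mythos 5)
Your proposal is correct and follows essentially the same route as the paper's own proof: unfold $k = r\sigma_r\lambda_k$, push $\pi_{\lfs(i)}$ homomorphically through the operators of $r$ (which contains only operators, gates, variables and $\varnothing$, since $\lfs(i)\cap\lfs(j)=\emptyset$), recover $\mathfrak{a}\lambda_i$ at each gate position via the emission/reception case split, and reduce $\seq(x\sigma_s,x\sigma_t)$ to $x\sigma_s$ modulo $E$ at each variable position, exactly as the paper does. The one notable difference is that you explicitly isolate the foreign-vanishing step $\pi_{\lfs(i)}(x\sigma_t)=_E\varnothing$ (together with the $=_E$-stability of gate mappings and the gate-freeness of $x\sigma_s,x\sigma_t$) as a lemma requiring $\varnothing$-absorption for $\alt$ and $\loopS$, which goes beyond the unit laws the paper lists in $E$; the paper's proof silently asserts $\pi_{\lfs(i)}(\seq(x\sigma_s,x\sigma_t))=_E x\sigma_s$ at this very point, so your version is, if anything, the more careful of the two.
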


\begin{proof}
We prove $\pi_{\theta(i)}(k) =_E i$ (the case $\pi_{\theta(j)}(k) =_E j$ is analogous). Let us introduce the terms $s,t,r$, mappings $\mappingNotation_i,\mappingNotation_j,\mappingNotation_k$ and substitutions $\sigma_s,\sigma_t,\sigma_r$ such that the conditions of Def.\ref{def:int_compo_via_gate_anti_unif} are satisfied.

First, note that $\lfs(r) = \emptyset$. Indeed, $r$ is a generalization of $s$ and $t$ (hence $i$ and $j$), and cannot contain non-shared constants. As $\lfs(i) \cap \lfs(j) = \emptyset$, the only shared constants are $\varnothing$ or gate symbols in $\mathfrak{F}_0$, so $\pi_{\theta(i)}(r) = r$.

Then, consider that $\forall x \in \dom(\sigma_r) = \dom(\sigma_s)$, we have $x \sigma_r = \textsf{seq}(x \sigma_s, x \sigma_t)$.
Because $\sigma_s$ is used to build $s$ and then $i$ from $r$, we must have $\lfs(x \sigma_s) \subseteq \lfs(i)$. Similarly, $\lfs(x \sigma_t) \subseteq \lfs(j)$.
Therefore, given $\lfs(i) \cap \lfs(i) = \emptyset$, we have $\pi_{\theta(i)}(x \sigma_r) = \pi_{\theta(i)}(\textsf{seq}(x \sigma_s, x \sigma_t)) =_E x \sigma_s$.
As a result, $\pi_{\theta(i)}(r \sigma_r) =_E \pi_{\theta(i)}(r) \sigma_s = r\sigma_s$.

Consider then that $\forall (\_,\_,\mathfrak{a}) \in \gamma$ we have $\mathfrak{a} \mappingNotation_k = \textsf{vp}(\mathfrak{a} \mappingNotation_i,\mathfrak{a} \mappingNotation_j)$.
Via a similar reasoning, we have $\pi_{\theta(i)}(\mathfrak{a} \mappingNotation_k) = \mathfrak{a} \mappingNotation_i$.
Therefore $\pi_{\theta(i)}(r \sigma_r \mappingNotation_k) = \pi_{\theta(i)}(r \sigma_r) \mappingNotation_i$. 

Finally, as $k = r \sigma_r \mappingNotation_k$, we have
$\pi_{\theta(i)}(k) = \pi_{\theta(i)}(r \sigma_r) \mappingNotation_i =_E r\sigma_s \mappingNotation_i = s \mappingNotation_i = i$. \qed

\end{proof}

\begin{figure}[t]
    \centering
     
\begin{tabularx}{\textwidth}{@{} l c  c c c c c @{}}
        \toprule
        \textbf{Interaction $k$} & 
        \textbf{$\size(k)$} & 
        \textbf{Nb. Gates} & 
        \multicolumn{4}{c}{\textbf{Avg. composition Time (ms)}} \\
        \cmidrule(lr){4-7}
        & & & 
         \multicolumn{2}{c}{$(i_\text{norm},j_\text{norm})$} & 
         \multicolumn{2}{c}{$(i_\text{mut},j_\text{mut})$} \\
         \cmidrule(lr){4-7}
        & & & 
        with \textsf{F} & 
        without \textsf{F} & with \textsf{F} & 
        without \textsf{F} \\
        \midrule
ATM\cite{pomsets_ref}&33&[7, 17]&15.016(\greencheck)&timeout&17.757(\greencheck)&timeout\\
Alt3bit\cite{LangeCFSMtoGlobal}&12&6&4.132(\greencheck)&21.804(\greencheck)&6.01(\greencheck)&32.932(\greencheck)\\
DistVoting\cite{pomsets_ref}&23&8&8.243(\greencheck)&timeout&10.955(\greencheck)&timeout\\
FilterCo\cite{LangeCFSMtoGlobal}&11&5&1.396(\greencheck)&1.565(\greencheck)&2.069(\greencheck)&2.339(\greencheck)\\
Game\cite{LangeCFSMtoGlobal}&16&[5, 6]&2.525(\greencheck)&3.179(\greencheck)&3.163(\greencheck)&4.283(\greencheck)\\
HealthSys\cite{LangeCFSMtoGlobal}&22&[4, 8]&5.978(\greencheck)&18.496(\greencheck)&7.495(\greencheck)&22.786(\greencheck)\\
Logistic\cite{LangeCFSMtoGlobal}&26&[6, 11]&9.22(\greencheck)&timeout&12.457(\greencheck)&timeout\\
ProfOnline\cite{benchmark-professor-online-ref}&68&[14, 27]&59.567(\greencheck)&timeout&75.474(\greencheck)&timeout\\
Sanitary\cite{LangeCFSMtoGlobal}&30&[6, 13]&11.491(\greencheck)&40.242(\greencheck)&14.795(\greencheck)&51.265(\greencheck)\\
TPM\cite{LangeCFSMtoGlobal}&17&7&3.204(\greencheck)&4.911(\greencheck)&4.915(\greencheck)&7.437(\greencheck)\\
Travel\cite{benchmark-travel-ref}&26&[6, 11]&6.411(\greencheck)&timeout&8.203(\greencheck)&timeout\\
TwoBuyers\cite{benchmark-two-buyer-protocol-ref}&22&[5, 11]&3.78(\greencheck)&5.558(\greencheck)&4.852(\greencheck)&7.162(\greencheck)\\
        \bottomrule
    \end{tabularx}
  \caption{Experiments on interaction composition using anti-unification}
    \label{fig:tab-benchmark}
\end{figure}

\paragraph{Experiments.} 
We have implemented equational sc-preserving anti-unification in a Rust prototype tool~\cite{generalizer_repo_2025}, which also supports interaction composition and integrates interaction manipulation facilities from an existing Rust-based tool~\cite{hibou_repo_2025}. Experiments were run on an Intel Core i7-13850HX (20-core, 2.1 GHz) with 32 GB RAM, using the \textsf{Fail} (abbreviated \textsf{F}) rule to improve performance, and a timeout threshold of 60s. Fig.\ref{fig:tab-benchmark} summarizes experiments on interactions adapted from the literature~\cite{LangeCFSMtoGlobal,benchmark-professor-online-ref,benchmark-travel-ref,benchmark-two-buyer-protocol-ref}. For each interaction $k$ with lifelines $L$, we randomly generate up to $5$ partitions $(L_1,L_2)$, each of size at least $\lfloor L/2\rfloor$. For each partition: (1) compute projections $(i \!=\! \pi_{L_1}(k),j \!=\! \pi_{L_2}(k))$ and their normal forms $(i_\text{norm}, j_\text{norm})$; (2) apply $7$ random commutativity rewrites $\para(x,y) \approx \para(y,x)$ and $\alt(x,y) \approx \alt(y,x)$ using a rewriting tool~\cite{maude_repo_2025} to produce mutated terms $(i_\text{mut}, j_\text{mut})$; (3) compose $(i_\text{norm}, j_\text{norm})$ and $(i_\text{mut}, j_\text{mut})$, and compare the normal forms of their compositions and that of $k$ to determine success. Each row in Fig.\ref{fig:tab-benchmark} corresponds to an interaction: the second column shows its size, the third the range of gates in local partitions, and the last four columns report average composition durations with and without \textsf{Fail}, for $(i_\text{norm}, j_\text{norm})$ and $(i_\text{mut}, j_\text{mut})$. A global success ($\greencheck$) indicates that the result matches the original interaction. Appendix \ref{appendix_additional_examplesToto}
presents examples of interaction composition with our tool.

\section{Conclusion}
We introduced a tool-supported method for composing interactions, represented as terms, from partial views tagged with gates (special constants) via 
sc-preserving anti-unification, which preserves gates while the surrounding structure may be generalized, modulo an equational theory of interactions. Our sc-preserving framework extends and refines anti-unification \cite{ALPUENTE2014,ALPUENTE2022} while upholding expected termination, soundness and completeness properties. Future work includes exploring additional equations, such as idempotence, $\textsf{alt}(x,x) \approx x$, and ad-hoc equations like $\textsf{seq}(\textsf{loop}(x),\textsf{loop}(x)) \approx \textsf{loop}(x)$, to further enhance anti-unification’s ability to compose interactions.

\bibliographystyle{splncs04}
\bibliography{biblio/biblio}

\newpage
\appendix

\section{Additional examples of interaction composition} \label{appendix_additional_examplesToto}

In Sec.\ref{sec:application_to_int_compo}, we have presented a composition procedure of interaction models. The following tables show the result of applying our tool to compose selected local interactions. In each case, the local interactions $i$ and $j$ are associated with a tagging that binds compatible communication actions with the same gate. As in Fig.\ref{fig:example_interaction_composition}, the gates are represented with colored squares next to the arrow of their action. The composition then combines the interaction $i$ and $j$ by merging compatible communication actions into value-passings in an interaction $k$, while preserving local behavior.

On Table.\ref{tab:exple1}, the tagging is $\gamma = \{(121,1,\textcolor{red!50!white}{\mathfrak{a}}),(122,221,\textcolor{violet!50!white}{\mathfrak{b}}),(22,222,\textcolor{cyan!50!white}{\mathfrak{c}})\}$ with:

\[
\begin{array}{lllclll}
i|_{121} &=& \hlf{center}!\hms{query} 
&
~~~~~~
& 
j|_{1} &=& \hlf{proxy}?\hms{query}
\\
i|_{122} &=& \hlf{center}?\hms{msg} 
&
~~~~~~
& 
j|_{221} &=& \hlf{proxy}!\hms{msg}
\\
i|_{22} &=& \hlf{center}!\hms{sig}
&
~~~~~~
& 
j|_{222} &=& \hlf{proxy}?\hms{sig}
\end{array}
\]

The gates $\textcolor{red!50!white}{\mathfrak{a}},\textcolor{violet!50!white}{\mathfrak{b}}$ and $\textcolor{cyan!50!white}{\mathfrak{c}}$ are represented with red, purple and blue squares respectively. The composition of $i$ and $j$ gives the interaction k.

\vspace{1cm}

\begin{table}[]
    \centering
    \begin{tabular}{|c|c|c|}
      \hline
        $i$ & $j$ & $k$ \\ \hline 
         \includegraphics[width=0.25\textwidth]{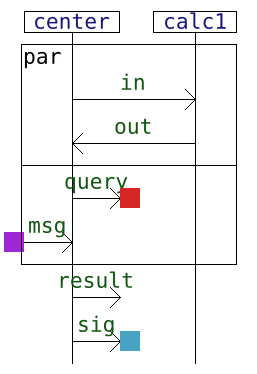}   & \includegraphics[width=0.25\textwidth]{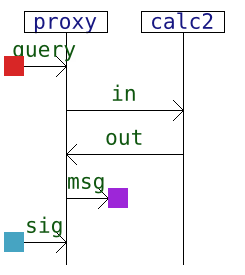} &  \includegraphics[width=0.3\textwidth]{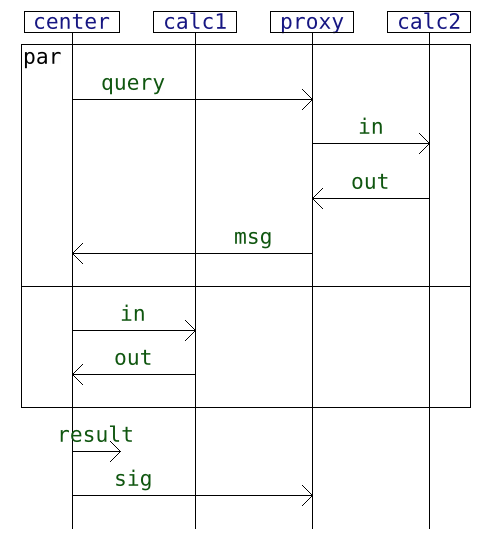}\\ \hline
         \resizebox{.35\textwidth}{!}{\begin{tikzpicture}[scale=0.7, transform shape,]
    \node (o) {\textsf{seq}} [sibling distance=2cm, level distance=0.5cm]
    child { node (o1) {\textsf{par}} [sibling distance=0.8cm, level distance=0.5cm]
        child { 
            node[xshift=-20mm,yshift=-30mm] (o11) {\textsf{seq}}
            child{
                node[yshift=-2mm] (o111) {\vp(\hlf{center}!\hms{in},\hlf{calc1}?\hms{in})}            
            }
            child{
                node[xshift=30mm,yshift=-2mm] (o112) {\vp(\hlf{calc1}!\hms{out},\hlf{center}?\hms{out})}
            }
        }
        child{
             node[yshift=-1cm] (o12) {\textsf{seq}} [sibling distance=1.8cm, level distance=1cm]
             child{
                node (o121) {$\hlf{center}!\hms{query}$}
                 node[below=0.25,draw,fill=red!50!white] (pos1) {121}
             }
             child{
                node (o122) {$\hlf{center}?\hms{msg}$}
                 node[below=0.25,draw,fill=violet!50!white] (pos1) {122}
             }
        }
    }
    child { node (o2) {\textsf{seq}} [sibling distance=1.3cm, level distance=0.8cm]
        child{
            node (o21) {$\hlf{center}!\hms{result}$}
        }
        child{
            node[yshift=-4mm] (o22) {$\hlf{center}!\hms{sig}$}
             node[below=0.55,draw,fill=cyan!50!white] (pos1) {22}
        }
    };
\end{tikzpicture}} & \resizebox{.35\textwidth}{!}{\begin{tikzpicture}[scale=0.7, transform shape,]
    \node (o) { \textsf{seq} } [sibling distance=1.5cm, level distance=0.6cm]
    child{
        node (o1) {$\hlf{proxy}?\hms{query}$}
        node[below=0.25,draw,fill=red!50!white] (pos1) {1}
    }
    child{
        node (o2) { \textsf{seq}} 
        child{
            node[xshift=-8mm,yshift=-13mm] (o21) {\textsf{seq}}
            child{
                node[xshift=-3mm,yshift=-4mm] (o211) {\vp(\hlf{proxy}!\hms{in},\hlf{calc2}?\hms{in})}
            }
            child{
                node[xshift=15mm,yshift=-4mm] (o212) {\vp(\hlf{calc2}!\hms{out},\hlf{proxy}?\hms{out})}
            }
        }
        child{
            node(o22) {\textsf{seq}}
            child{
                node (o221) {$\hlf{proxy}!\hms{msg}$}
                 node[below=0.25,draw,fill=violet!50!white] (pos1) {221}
            }
            child{
                node (o222) {$\hlf{proxy}?\hms{sig}$}
                 node[below=0.25,draw,fill=cyan!50!white] (pos1) {222}
            }
        }
    };
\end{tikzpicture}} & 
         \resizebox{.35\textwidth}{!}{\begin{tikzpicture}[scale=0.6, transform shape,]
    \node (o) {\textsf{seq}} [sibling distance=2cm, level distance=0.5cm]
    child { node (o1) {\textsf{par}} [sibling distance=0.8cm, level distance=0.5cm]
                child{
                     node[yshift=-1cm] (o11) {\textsf{seq}} [sibling distance=1.8cm, level distance=1cm]
                     child{
                        node[xshift=-6mm,draw,fill=red!50!white] (o111) {\vp(\hlf{center}!\hms{query},\hlf{proxy}?\hms{query})}
                     }
                     child{
                        node[xshift=12mm,yshift=-1cm] (o112) {\textsf{seq}}
                        child { 
                            node (o1121) { \textsf{seq}} 
                                    child{
                                        node[xshift=-3mm,yshift=-4mm] (o211) {\vp(\hlf{proxy}!\hms{in},\hlf{calc2}?\hms{in})}
                                    }
                                    child{
                                        node[xshift=12mm,yshift=-4mm] (o212) {\vp(\hlf{calc2}!\hms{out},\hlf{proxy}?\hms{out})}
                                    }
                            }
                        child{ 
                            node[xshift=10mm,draw,fill=violet!50!white] (o1122) {\vp(\hlf{proxy}!\hms{msg},\hlf{center}?\hms{msg})}
                        }
                }
            }
        child { 
            node[xshift=26mm,yshift=-16mm] (o12) {\textsf{seq}}
            child{
                node[yshift=-2mm] (o1121) {\vp(\hlf{center}!\hms{in},\hlf{calc1}?\hms{in})}            
            }
            child{
                node[xshift=25mm,yshift=-2mm] (o122) {\vp(\hlf{calc1}!\hms{out},\hlf{center}?\hms{out})}
            }
        }
    }
    child { node[xshift=10mm] (o2) {\textsf{seq}} [sibling distance=1.3cm, level distance=0.8cm]
        child{
            node (o21) {$\hlf{center}!\hms{result}$}
        }
        child{
            node[xshift=6mm,yshift=-4.2mm,draw,fill=cyan!50!white] (o22) {\vp(\hlf{center}!\hms{sig},\hlf{proxy}?\hms{sig})}
        }
    };
\end{tikzpicture}}\\ \hline
    \end{tabular}
    \caption{Illustration of the composition for a simple protocol.}
    \label{tab:exple1}
\end{table}

\begin{table}[]
    \centering
    \begin{tabular}{|c|c|c|}
      \hline
        $i$ & $j$ & $k$ \\ \hline 
         \includegraphics[width=0.32\textwidth]{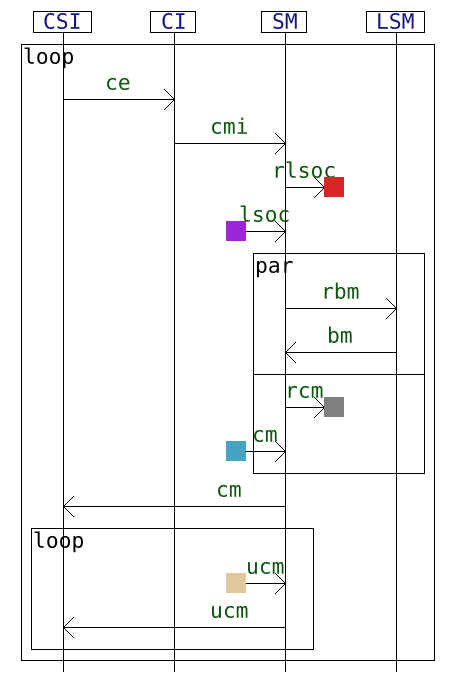}   & \includegraphics[width=0.32\textwidth]{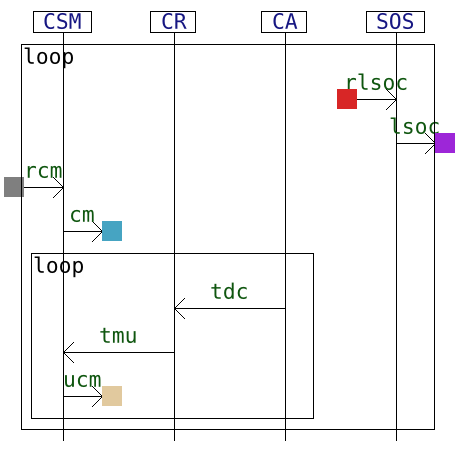} &  \includegraphics[width=0.47\textwidth]{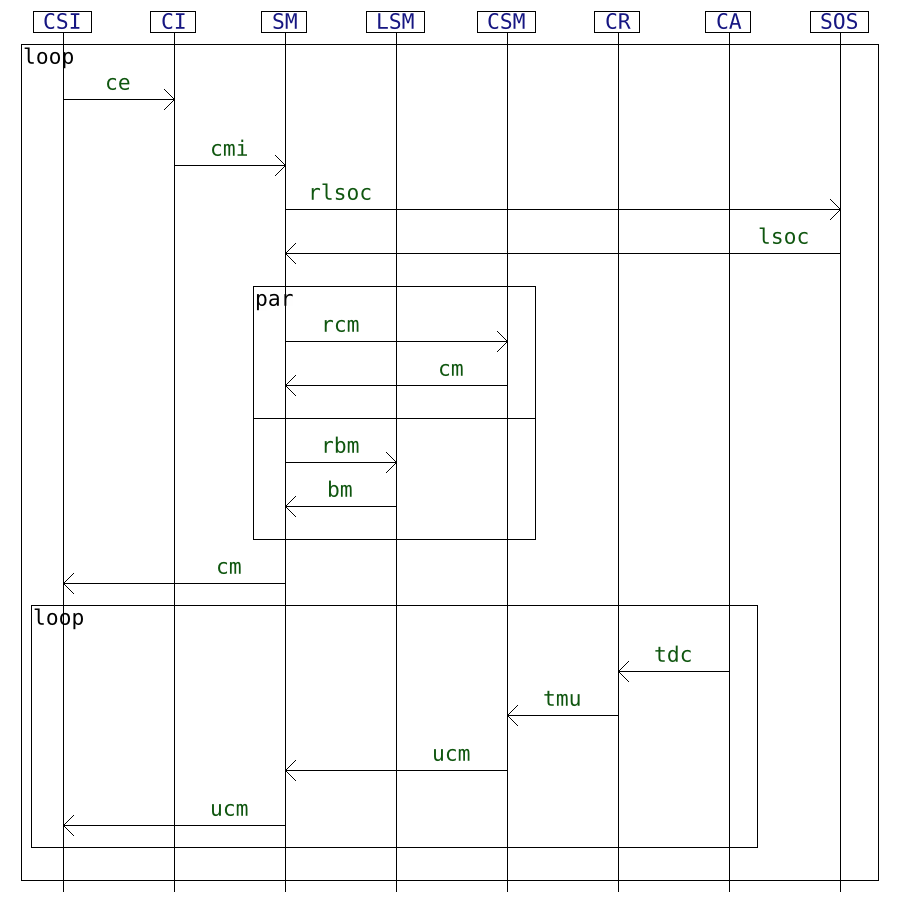}\\ \hline
    \end{tabular}
    \caption{Illustration of the composition for an interaction describing the query of sensor data \cite{Sensor_ref}.}
    \label{tab:expleSensor}
\end{table}

\begin{table}[]
    \centering
    \begin{tabular}{|c|c|c|}
      \hline
        $i$ & $j$ & $k$ \\ \hline 
         \includegraphics[width=0.25\textwidth]{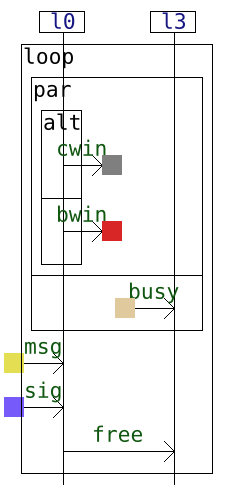}   & \includegraphics[width=0.25\textwidth]{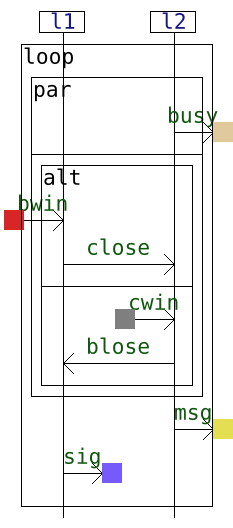} &  \includegraphics[width=0.4\textwidth]{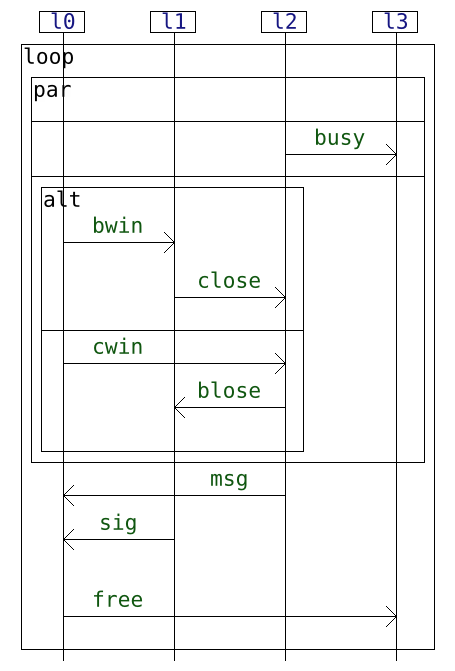}\\ \hline
    \end{tabular}
    \caption{Illustration of the composition for the Game interaction \cite{LangeCFSMtoGlobal}.}
    \label{tab:expleGame}
\end{table}

\begin{table}[]
    \centering
    \begin{tabular}{|c|c|c|}
      \hline
        $i$ & $j$ & $k$ \\ \hline 
         \includegraphics[width=0.25\textwidth]{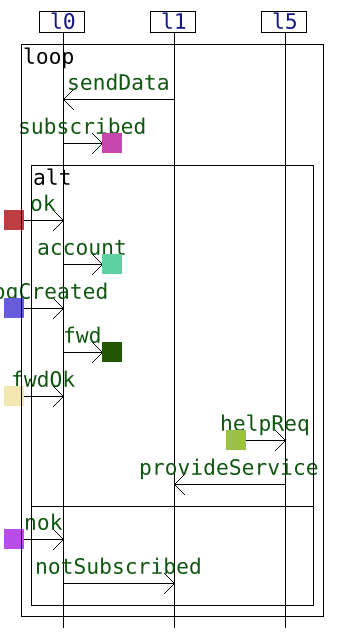}   & \includegraphics[width=0.25\textwidth]{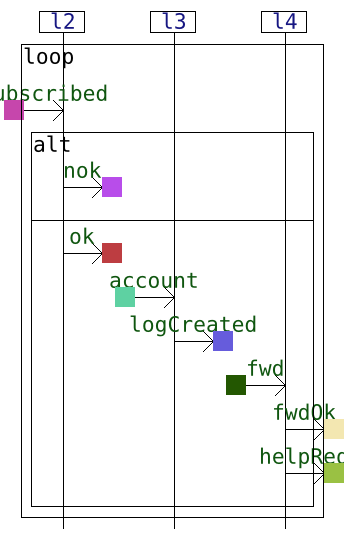} &  \includegraphics[width=0.4\textwidth]{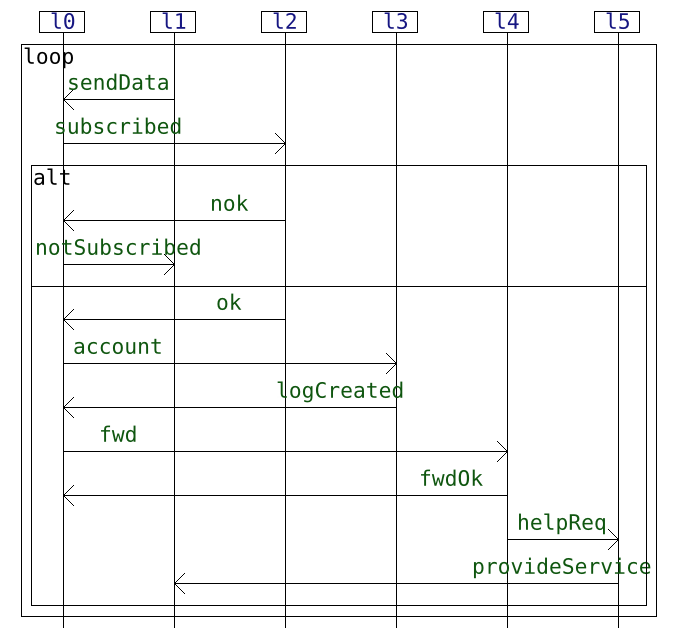}\\ \hline
    \end{tabular}
    \caption{Illustration of the composition for the Health System interaction \cite{LangeCFSMtoGlobal}.}
    \label{tab:expleHS}
\end{table}


\begin{table}[]
    \centering
    \begin{tabular}{|c|c|c|}
      \hline
        $i$ & $j$ & $k$ \\ \hline 
         \includegraphics[width=0.15\textwidth]{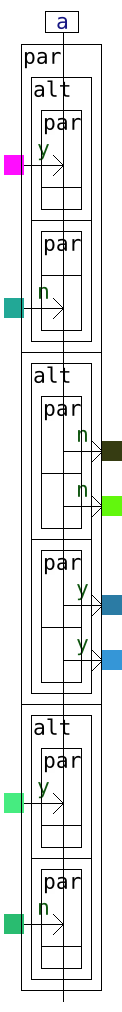}   & \includegraphics[width=0.25\textwidth]{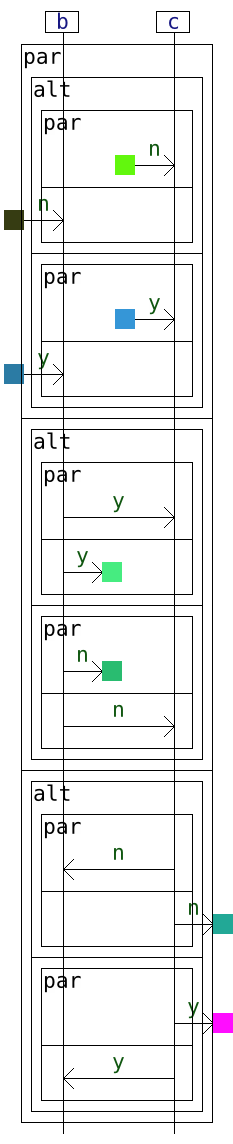} &  \includegraphics[width=0.4\textwidth]{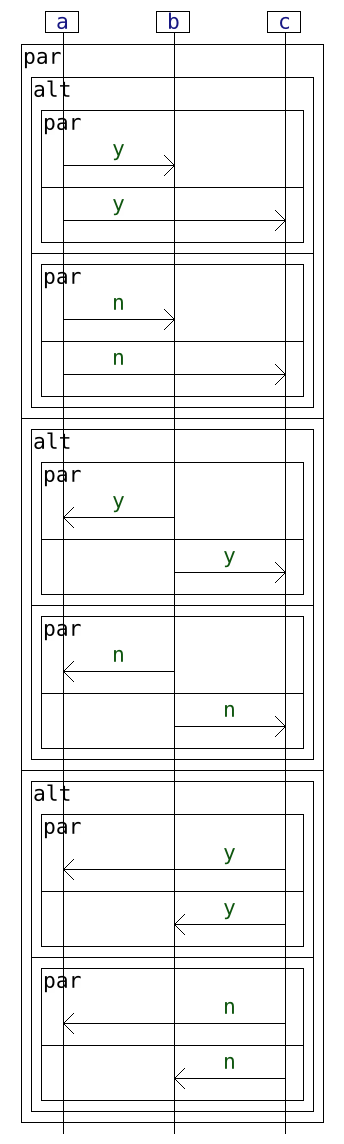}\\ \hline
    \end{tabular}
    \caption{Illustration of the composition for a Distributed Voting protocol \cite{pomsets_ref}.}
    \label{tab:expleDV}
\end{table}

\begin{table}[]
    \centering
    \begin{tabular}{|c|c|c|}
      \hline
        $i$ & $j$ & $k$ \\ \hline 
         \includegraphics[width=0.26\textwidth]{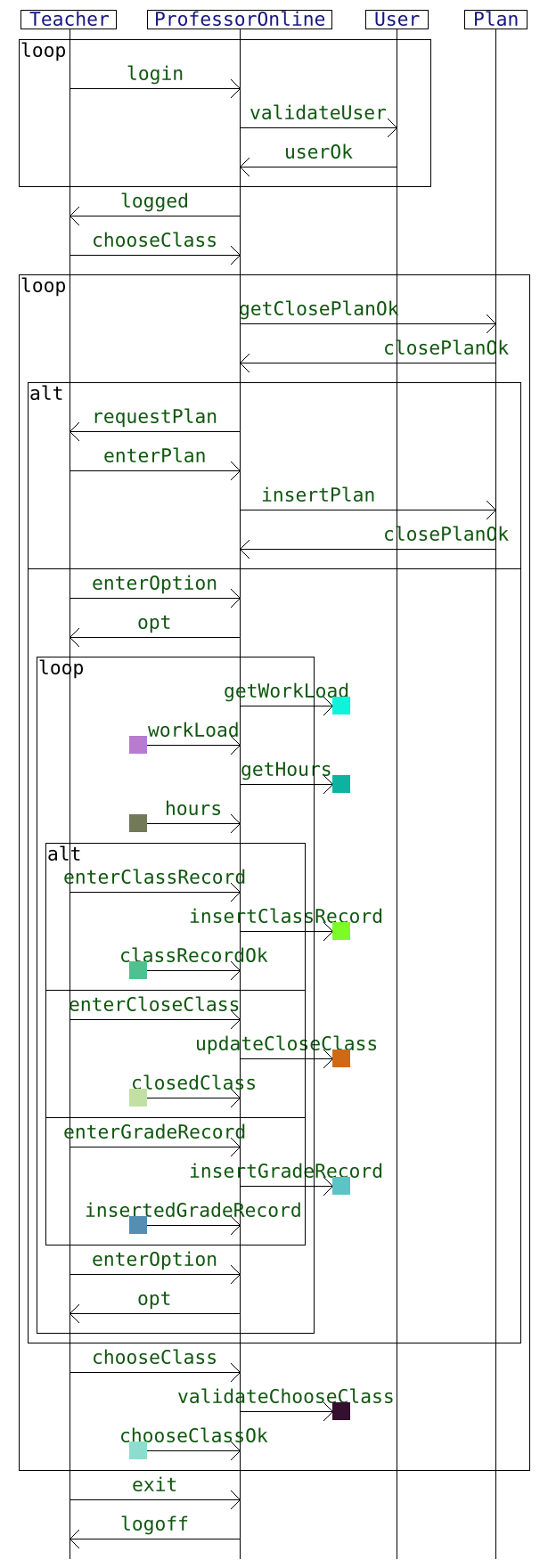}   & \includegraphics[width=0.3\textwidth]{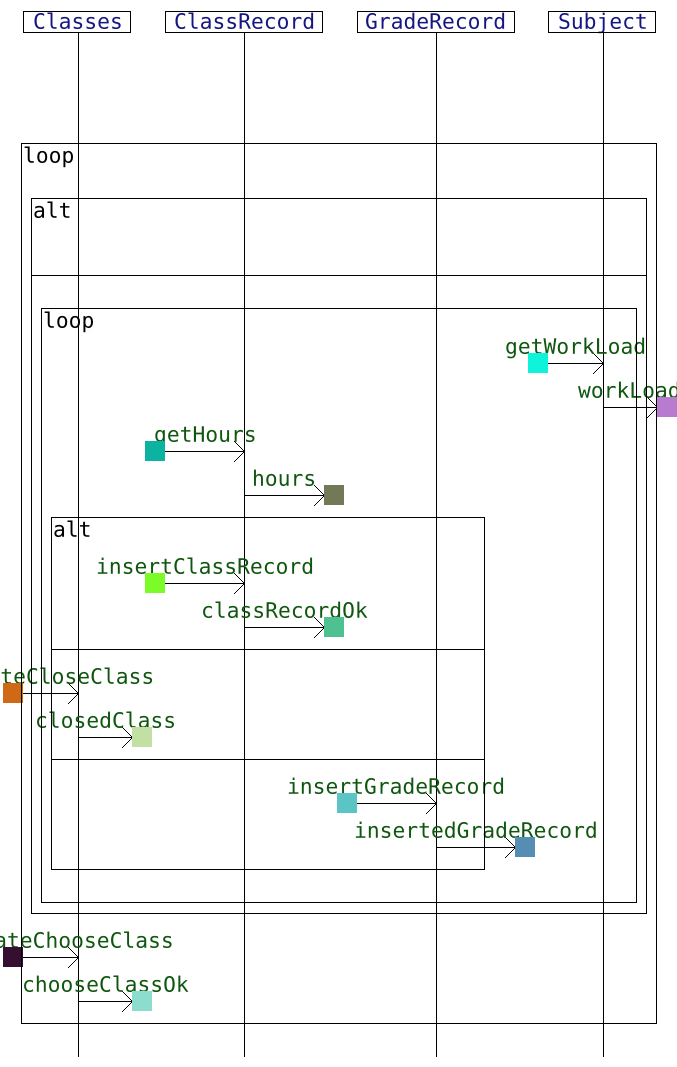} &  \includegraphics[width=0.44\textwidth]{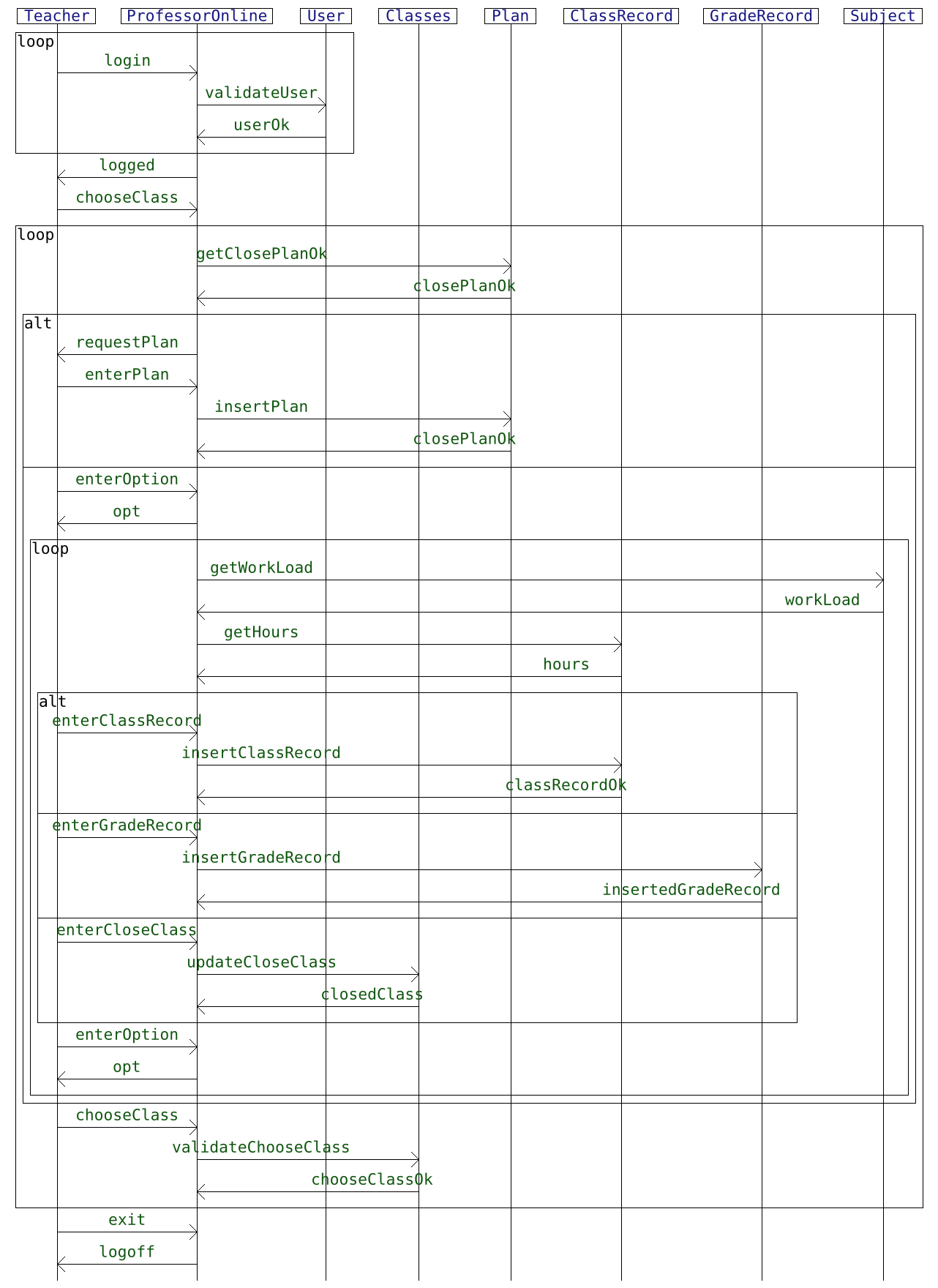}\\ \hline
    \end{tabular}
    \caption{Illustration of the composition for the Professor online interaction \cite{benchmark-professor-online-ref}}.
    \label{tab:expleProfOnline}
\end{table}

\end{document}